\definecolor{darkblue}{RGB}{0,0,128}
\definecolor{darkgreen}{RGB}{0,150,0}
\newtheorem{theorem}{Theorem}
\newtheorem*{theorem*}{Theorem}
\newtheorem{lemma}[theorem]{Lemma}
\newtheorem{claim}[theorem]{Claim}
\newtheorem{corollary}[theorem]{Corollary}
\newtheorem{definition}[theorem]{Definition}
\title{Tradeoffs on the volume of fault-tolerant circuits}
\author[1]{Anirudh Krishna}
\author[2,3]{Gilles Z\'emor}
\affil[1]{\it {\small {IBM Quantum, T.\ J.\ Watson Research Center, Yorktown Heights, New York 10598, USA}}}
\affil[2]{\it {\small {Institut de Math\'ematiques de Bordeaux, UMR 5251, Universit\'e de Bordeaux, France}}}
\affil[3]{\it {\small {Institut Universitaire de France}}}
\date{}
\begin{document}
\maketitle

\begin{abstract}
    Dating back to the seminal work of von Neumann \cite{von1956probabilistic}, it is known that error correcting codes can overcome faulty circuit components to enable robust computation.
    Choosing an appropriate code is non-trivial as it must balance several requirements.
    Increasing the rate of the code reduces the relative number of redundant bits used in the fault-tolerant circuit, while increasing the distance of the code ensures robustness against faults.
    If the rate and distance were the only concerns, we could use asymptotically optimal codes as is done in communication settings.
    However, choosing a code for \emph{computation} is challenging due to an additional requirement: 
    The code needs to facilitate accessibility of encoded information to enable computation on encoded data.
    This seems to conflict with having large rate and distance.
    We prove that this is indeed the case, namely that a code family cannot simultaneously have constant rate, growing distance and short-depth gadgets to perform encoded $\CNOT$ gates.
    As a consequence, achieving good rate and distance may necessarily entail accepting very deep circuits, an undesirable trade-off in certain architectures and applications.
\end{abstract}

\section{Introduction}

Error correcting codes, beyond their extensive use in communication and storage, can be employed to ensure reliable computation in the presence of noise \cite{von1956probabilistic,dobrushin1977lower,dobrushin1977upper,pippenger1985networks,pippenger1991lower,gacs1994lower,spielman1996highly,romashchenko2006reliable}.
In this setting, one can use an error correcting code $\ccode$ to simulate a circuit $\cid$ using a fault-tolerant circuit $\cft$.
There is an intimate relationship between the choice of error correcting code and the size of the resulting circuit $\cft$.
A fundamental question in fault-tolerant circuit design is: What is the smallest fault-tolerant circuit $\cft$ that can be used to simulate the noisy circuit $\cid$ up to some desired degree of robustness?

The problem of constructing reliable circuits dates back to von Neumann \cite{von1956probabilistic}.
In the von Neumann fault model, each elementary gate in the circuit $\cft$ can fail with some constant probability and we desire its output to be correct with a sufficiently large probability.
To overcome this problem, von Neumann and works that followed used error correcting codes to encode information and, in parallel, designed fault-tolerant gadgets to perform a universal set of operations on encoded data.
These results prove that an ideal circuit $\cid$ with volume $\vol$ can be replaced by a fault-tolerant circuit $\cft$ with volume $\volft = O(\vol \, \log(\vol))$.
We review further developments in Sec.~\ref{subsec:related-work}.
In contrast to these achievability results, our focus is on understanding \emph{lower bounds} on the efficiency of the fault-tolerant circuit $\cft$.

\textbf{Main question:} Can we construct a circuit $\cft$ that obeys $\volft = O(\vol)$ while simultaneously being robust to many errors?
We consider a setting, to the best of our knowledge, first explicitly considered by Spielman \cite{spielman1996highly} \footnote{See \cite{spielman1996highly} Section 3; specifically, the discussion following Remark 4.}.
In this setting, we compute directly on encoded information and allow the output to be a corrupt codeword.
We disregard the task of encoding to/ decoding from the error correcting code.
Secondly, as Spielman highlights, we would like to choose an error correcting code that can be used to simulate any circuit $\cid$.
If we choose $\ccode$ for a specific computation, we might find a trivial solution where the computation to be performed is hidden in the encoding map for the code $\ccode$.

\textbf{Our main result: A lower bound on the size of fault-tolerant circuits}
We provide evidence that if the volume $\volft$ of the fault-tolerant circuit $\cft$ obeys $\volft = O(\volid)$, where $\volid$ is the volume of the circuit $\cid$ that we wish to simulate, then the fault-tolerant circuit $\cft$ can only tolerate a bounded number of errors.
To be precise, if we use a binary linear code $\ccode: \bbF_2^{k} \to \bbF_2^{n}$ to construct the fault-tolerant circuit $\cft$, then the code $\ccode$ cannot have good distance.

Intuitively, we \emph{expect} this problem to be difficult as it requires conflicting constraints.
For example, consider a binary linear code $\ccode$ with large distance.
The codewords must be packed together densely to achieve good rate.
This complicates encoded computation as, to address a \emph{subset} of encoded bits, we expect to first have to untangle codewords.
Otherwise, we may accidentally affect codewords that share support with the codeword we wish to address.
Our result formalizes this intuition.

\textbf{Additional context:} This work was originally motivated by studying \emph{quantum} error correcting codes which--like their classical counterparts--can be used to robustly simulate \emph{quantum} circuits.
This foundational idea was established in a series of landmark results on the subject \cite{aharonov1997fault,knill1996concatenated,kitaev1997quantum,aliferis2005quantum}.
Physical components in quantum circuits fail at rates far too high to reliably implement algorithms of interest without some form of quantum error correction\footnote{For example, see \cite{beverland2021cost} for estimates of circuit sizes required for implementing algorithms of interest.}.
Designing efficient fault-tolerant quantum circuits thus remains a central challenge in quantum error correction, viewed both through an engineering lens and also a more fundamental point-of-view where one seeks to understand asymptotic tradeoffs.
Progress in the asymptotic regime has highlighted the relationship between the choice of error correcting code and the size of the resulting fault-tolerant circuit \cite{kovalev2013fault,gottesman2013fault,yamasaki2024time}.
This has then spurred paradigm changes, such as the push toward quantum low-density parity-check (LDPC) codes (See \cite{gottesman2013fault}).
Quantum LDPC codes, by virtue of their rate, can optimize the total number of physical qubits needed to implement a fault-tolerant circuit.
This is especially valuable given the persistent difficulty of engineering large numbers of reliable, well-connected qubits.

However, the advantages of quantum LDPC codes do not guarantee that they are universally the best choice.
We need to account for the \emph{volume} of the fault-tolerant circuit, which includes the total time required to implement it, and not merely the total number of physical qubits used at any single point in time.
This, in turn, means that the quantum error correcting code used must allow for efficient ways to perform fault-tolerant encoded operations.
Indeed, Gottesman's construction \cite{gottesman2013fault} works with simulating sparse circuits $\cid$, i.e.\ in each time step, only a limited number of encoded operations are performed.
It is unclear whether the applicability of quantum LDPC codes extends to circuits $\cid$ that are not sparse.
Thus, to assess the suitability of quantum LDPC codes for a problem of interest, further progress needs to be made on performing fault-tolerant encoded operations.

Our work contributes to this broader effort by providing evidence of inherent limits: Even when codes have good rate, there may be fundamental constraints on how much they can reduce the volume of fault-tolerant circuits.

\subsection{Technical Overview}
\label{subsec:tech-overview}

\subsubsection{Our Model}

We consider circuits that process binary strings, constructed using a finite set of gates with fan-in and fan-out at most 2.
Let $\cid$ be a circuit with width $\widthid$ and depth $\depthid$, and let $\cft$ be the circuit that simulates $\cid$ with width $\widthft$ and depth $\depthft$.
The width and depth are figures-of-merit to assess the size of circuits:
The \emph{depth} $\depthid$ ($\depthft$) of the circuit $\cid$ ($\cft$) is the total number of time steps it takes to execute the circuit.
This can be non-trivial as each bit can only participate in 1 gate in each time step.
The \emph{width} $\widthid$ ($\widthft$) of the circuit $\cid$ ($\cft$) is the total number of bits it uses.
This includes any ancillary bits used as scratch space during the execution of the circuit.
The \emph{space overhead} is the ratio $\widthft/\widthid$.
The volume $\vol$ ($\volft$) is the product of the width $\width$ ($\widthft$) and the depth $\depth$ ($\depthft$).
The \emph{volume overhead} is the ratio $\volft/\volid$.
Our aim is to understand whether we can achieve constant volume overhead in a specific model that we now explain.

We consider the class of \emph{sparse} circuits $\cid$: In each time step $1 \leq t \leq \depthid$ of $\cid$, there is exactly one gate $\elemid(t)$ between two registers.
To be clear, the circuit $\cft$ itself and the gadgets $\elemft$ need not be sparse circuits; the sparsity constraint only applies to $\cid$.
Further simplifying our model, we assume that for all $1 \leq t \leq \depthid$, the gate $\elemid(t)$ is a $\CNOT$ gate\footnote{\textbf{Remark:} We study the $\CNOT$ gate rather than the ${\rm XOR}$ gate that appears more commonly as part of a universal gate set in classical computation.
Note, however, that we are simply using a reversible ${\rm XOR}$ gate as the action of the $\CNOT_{i,j}$ gate on $\m \in \bbF_2^{k}$ maps $(m_i,m_j) \mapsto (m_i, {\rm XOR}(m_i,m_j))$.}.
To be precise, there exist some distinct indices $i_t, j_t \in [k]$ such that $\elemid(t)$ is the gate $\CNOT_{i_t,j_t}$.
We say that $\elemft$ is a \emph{targeted} gate gadget because we can target the pair $i_t,j_t \in [k]$ while not affecting the remaining encoded bits.

Let $\m \in \bbF_2^{k}$ correspond to the input to $\cid$ and $\ccode: \bbF_2^{k} \to \bbF_2^{n}$ an error correcting code.
We assume that the encoded input $\ccode(\m)$ is given to us, i.e.\ we do not need to encode $\m$ into the code $\ccode$, nor do we decode the final result from the code.
If the width of the circuit $\widthid < k$, then we assume that $\m$ is padded by $0$s to make it the appropriate length.
We restrict our attention to linear codes as non-linear codes are often unwieldy in practice.
To construct $\cft$, we consider a step-wise process: For all time steps $1 \leq t \leq \depthid$, every gate $\elemid(t)$ in the circuit $\cid$ is replaced by a \emph{gadget} $\elemft(t)$.
The gadget $\elemft(t)$ itself a circuit that performs the gate $\elemid(t)$ directly on encoded information, i.e.\ it obeys
\begin{align}
\label{eq:def-encoded-op}
    \elemft(t) \circ \ccode = \ccode \circ \elemid(t)~.
\end{align}
For all $t \in \depthid$, we perform a round of error correction after the gadget $\elemft(t)$ (see schematic in Fig.~\ref{fig:circuit-intro}).
We explain our error model below.
By correcting errors at regular intervals, we guarantee that the number of corruptions does not overwhelm the error correcting code $\ccode$.

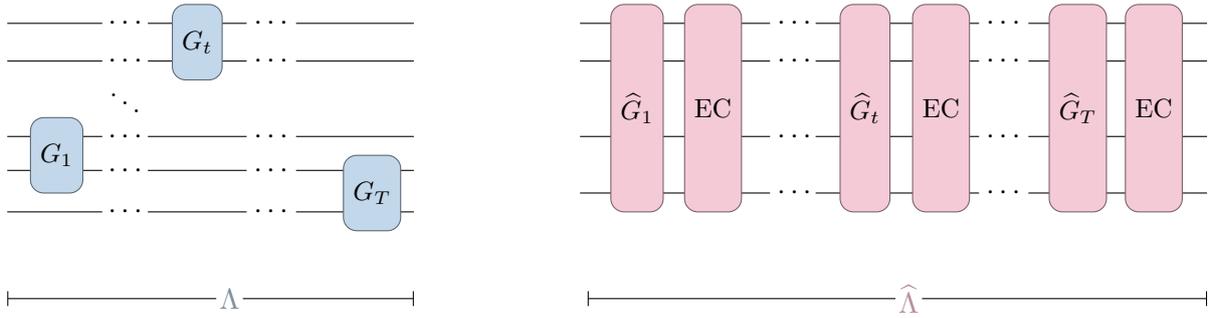
\begin{figure}[h]
    \centering
    \usetikzlibrary{positioning}
\definecolor{pastelblue}{RGB}{179,205,227}
\definecolor{pastelpink}{RGB}{241,189,205}

\colorlet{pastelblueborder}{pastelblue!50!black}
\colorlet{pastelpinkborder}{pastelpink!50!black}

\begin{tikzpicture}[gate/.style={draw=pastelblueborder, text=black, minimum width=0.5cm, minimum height=0.5cm},
                    dgate/.style={draw=pastelblueborder, minimum width=0.5cm, minimum height=1cm},
                    gadget/.style={draw=pastelpinkborder, minimum width=0.5cm, minimum height=2.75cm}]

  \foreach \y in {2.25,1.75,0.75,0.3,-0.25} {
    \draw (-3.5,\y) -- (-2.25,\y) node[xshift=9.5] {$\cdots$} (-1.65,\y)--(-0.35,\y) node[xshift=10] {$\cdots$} (0.3,\y)--(1.85,\y);
  }
  \node at (-1.95,1.3) {$\ddots$};
  
  \node[dgate,fill=pastelblue!80,rounded corners=5pt] at (1.3,0) {{\small $\elemid_{\depthid}$}};
  \node[dgate,fill=pastelblue!80,rounded corners=5pt] at (-1,2) {{\small $\elemid_t$}};
  \node[dgate,fill=pastelblue!80,rounded corners=5pt] at (-2.85,0.5) {{\small $\elemid_1$}};

    \draw[|-] (-3.5,-1.4) -- (-0.75,-1.4) node[xshift=5] {{\color{pastelblue!70!black}$\cid$}};
    \draw[-|] (-0.4,-1.4)--(1.85,-1.4);

\begin{scope}[xshift=250pt]
  \foreach \y in {2.25,1.75,0.75,0} {
    \draw (-4.75,\y) -- (-2.25,\y) node[xshift=10] {$\cdots$} (-1.65,\y)--(0.5,\y) node[xshift=10] {$\cdots$} (1.15,\y)--(3.5,\y);
  }
  \node at (-3.25,1.5) {$\vdots$};
  
  \node[gadget,fill=pastelpink!80,rounded corners=5pt] at (1.8,1.125) {{\small $\elemft_{\depthid}$}};
  \node[gadget,fill=pastelpink!80,rounded corners=5pt] at (2.8,1.125) {{\small ${\rm EC}$}};
  \node[gadget,fill=pastelpink!80,rounded corners=5pt] at (-1,1.125) {{\small $\elemft_t$}};
  \node[gadget,fill=pastelpink!80,rounded corners=5pt] at (0,1.125) {{\small ${\rm EC}$}};
  \node[gadget,fill=pastelpink!80,rounded corners=5pt] at (-3,1.125) {{\small ${\rm EC}$}};
  \node[gadget,fill=pastelpink!80,rounded corners=5pt] at (-4,1.125) {{\small $\elemft_1$}};

  \draw[|-] (-4.65,-1.4) -- (-0.6,-1.4) node[xshift=5] {{\color{pastelpink!75!black} $\cft$}};
    \draw[-|] (-0.25,-1.4)--(3.5,-1.4);
\end{scope}

\end{tikzpicture}
    \caption{Schematic illustrating the conversion of a sparse circuit $\cid$, on the left, into a fault-tolerant circuit $\cft$, on the right.
    In the $t$\textsuperscript{th} time step in $\cid$, there is one elementary gate $\elemid_t$ that acts on at most $2$ bits.
    In $\cft$, the gate $\elemid_t$ is simulated using the gadget $\elemft_t$ that acts directly on encoded information.
    It is then followed by a round of error correction.}
    \label{fig:circuit-intro}
\end{figure}

We consider a simple adversarial error model, where, for $1 \leq t \leq \depthid$, an adversary is allowed to introduce erasure errors prior to each gate $\elemft(t)$.
To be clear, the adversary is allowed to corrupt bits that are not in the support of the gadget $\elemft(t)$.
We assume that the gates themselves are free of faults; this assumption is justified below.
However, while the gates are free of faults, even ideal gates can \emph{spread} erasures: If a gate acts on even one input bit that is erased, then all output bits are considered erased.
For $E \subseteq [n]$, we use $\ccode(\m)\rvert_{E^c}$ to denote the restriction of the codeword to the coordinates $E^c = [n] \setminus E$; the bits in locations corresponding to $E$ have been replaced by the symbol $\perp$ denoting corruption.
The erasure map itself is denoted $\era_{E}: \{0,1\}^{n} \to \{0,1,\perp\}^n$.

Henceforth, we refer to gadgets $\elemft$ corresponding to $\elemid$ without an explicit time index $t$, with the assumption that the gadgets appear somewhere in the circuits $\cft$ and $\cid$ respectively.
Let $\elemft$ be a gadget corresponding to a gate $\elemid$, i.e.\ it obeys $\elemft \circ \ccode = \ccode \circ \elemid$.
Let $\ccode$ have distance $d$.
For $\epsilon \leq (0,1)$, we say $\elemft$ is $\epsilon$-robust if, for any erasure $E \subseteq [n]$ of weight $|E| \leq \epsilon d$, the output of $\elemft$ is correctable by an ideal decoder.
To be precise, we say $\elemft$ is $\epsilon$-robust if for all $E \subseteq n$ satisfying $|E| \leq \epsilon d$, there exists an erasure $F \subseteq [n]$ such that
\begin{align}
\label{eq:robustness}
    \elemft \circ \era_{E} \circ \ccode  = \era_{F} \circ \elemft \circ \ccode~,
\end{align}
and $|F| < d$.

In this paper we constrain gadgets in order to achieve constant volume overhead: We require that the width $\widthft$ of the circuit $\cft$, the total number of registers it uses, including any ancillary registers used as scratch space, is at most $O(n)$.
In particular, for each gadget $\elemft$ in the circuit $\cft$, we allow for at most a $O(n)$ ancillary bits.
Secondly, we stipulate that the depth of all gadgets $\elemft$ is a constant, independent of the width $\widthid$.
This is for two reasons.
\begin{enumerate}
    \item Firstly, we wish to understand the extent to which we can reduce the volume of the circuit $\cft$.
    If the gadget $\elemft$ has constant depth and uses only $O(n)$ ancillary bits, then we can hope to achieve $\volft = O(\volid)$.
    \item Secondly, short-depth circuits are automatically robust.
    To be precise, consider an input codeword $\w \in \ccode$ corrupted by an erasure $E \subseteq [n]$.
    After the gadget $\elemft$, suppose the erasure $E$ becomes an erasure $F \subseteq [n]$.
    If $E$ has weight $w$, and the gadget $\elemft$ has depth $\ell = O(1)$, then $F$ can have weight at most $2^{\ell} w$.
    Consequently, as long as $\elemft$ satisfies Eq.~\eqref{eq:def-encoded-op}, it is automatically $\epsilon$-robust where $\epsilon = 2^{-\ell}$.
\end{enumerate}
With this, we can return to the powers granted to the adversary.
Allowing them to introduce erasure errors in each time step of the gadget $\elemft$ will only change the robustness $\epsilon$ by a constant factor.

Although the time required to perform error correction must technically be included in the depth $\depthft$ of the circuit $\cft$, we ignore it.
To justify this, we note that the volume of circuits required to perform decoding has already been studied under various models \cite{bar2002streaming,gronemeier2006note}.
These lower bounds on the volume apply even to memories, i.e.\ in settings where we merely store information and retrieve it later.
In contrast, we want to focus on lower bounds on the volume of the circuit that emerge from the need to perform \emph{encoded operations}.
We cannot, however, cheat to use the unbounded resources granted for error correction operation to perform the gate itself.
In particular, Eq.~\eqref{eq:robustness} demands that the output of the gadget $\elemft$ should be within a ball of radius at most $d$ from the correct codeword.
This is sufficient for us to prove our main result.

As additional motivation, we remark that when implementing quantum circuits, the decoding operations are performed by classical computers.
When assessing the cost of constructing a fault-tolerant quantum computer, we only wish to account for the depth of \emph{quantum} computation, i.e.\ we only tally the depth required to implement the gate $\elemft$ using quantum gates.
We disregard polynomial-time classical computation used in decoding the error correcting code.

\subsubsection{Our Main Result}

Having established our setup, we proceed to state our main result and provide some intuition for the proof technique.

Theorem~\ref{thm:main} stated below highlights an intimate relationship between the code $\ccode$ and the volume overhead of the resulting fault-tolerant circuit.

\begin{restatable}[]{theorem}{main}
\label{thm:main}
Consider an infinite family of $[n,k(n),d(n)]$ codes $\famC = \{\ccode : \bbF_2^{k} \to \bbF_2^{n}\}$ such that for all codes $\ccode : \bbF_2^{k} \to \bbF_2^{n}$, for all $i \in [k]$, there exists $j \in [k]$, $j \neq i$, and a gadget $\elemft_i$ such that
\begin{enumerate}
    \item $\elemft_i$ encodes the gate $G_i=\CNOT_{i,j}$, i.e.\ $\elemft_{i} \circ \ccode = \ccode \circ \elemid_{i}$. 
    \item $\elemft_i$ has constant depth $\ell$.
    \end{enumerate}
    Then we have the relationship
    \begin{align*}
        k = O\left(\frac{n}{d^{1/q}} \right)~.
    \end{align*}
    where $1 < q \leq 2^{\ell}$ is a constant.
\end{restatable}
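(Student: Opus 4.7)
The approach is in two steps: a light-cone analysis that extracts from each depth-$\ell$ gadget $\elemft_i$ a large collection of short local decoders for the bit $m_i$, then a Katz--Trevisan-style random-restriction argument that converts these decoders into the claimed bound $k = O(n/d^{1/q})$.

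\textbf{Step 1: local decoders from light cones.} Fix a generator matrix $G$ of $\ccode$ with rows $g_{1,\cdot},\dots,g_{k,\cdot}$, and let $j=j(i)$. Because $\elemft_i\circ\ccode = \ccode\circ\CNOT_{i,j}$, on input $\ccode(\m)$ the gadget produces $\ccode(\m)\oplus m_i\cdot g_{j,\cdot}$, so for each coordinate $c$ with $g_{j,c}=1$ the output at $c$ equals $\ccode(\m)_c\oplus m_i$. A light-cone argument (depth $\ell$, fan-in 2, fixed ancilla initialisation) shows this output is a Boolean function of at most $L:=2^{\ell}$ code-register inputs $S_c\subseteq[n]$. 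Setting $T_c := S_c\cup\{c\}$ gives a \emph{recovery set} for $m_i$ of size at most $L+1$ (from $\ccode(\m)|_{T_c}$ one computes the output, then XORs with $\ccode(\m)_c$). Because $g_{j,\cdot}$ is a codeword of weight at least $d$, we obtain $\geq d$ such recovery sets for $m_i$. The dual light-cone bound---each $p\in[n]$ lies in the forward light cone of at most $L$ outputs of $\elemft_i$---shows that every $p$ belongs to at most $L+1$ of the sets $\{T_c\}$. A greedy pruning therefore extracts $\tau_i\geq d/(L+1)^2$ \emph{pairwise disjoint} recovery sets $R^i_1,\dots,R^i_{\tau_i}$ for $m_i$.

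\textbf{Step 2: random restriction.} Sample $U\subseteq[n]$ by keeping each coordinate independently with probability $\rho$. Disjointness of the $R^i_r$ gives
\[
\Pr\!\bigl[m_i\text{ is not determined by }\ccode(\m)|_U\bigr]
\;\leq\; \prod_{r=1}^{\tau_i}\bigl(1-\rho^{|R^i_r|}\bigr)
\;\leq\; \exp\!\bigl(-\rho^{L+1} d/(L+1)^2\bigr).
\]
Choose $\rho^{L+1} = C(L+1)^2\log k/d$ with $C$ a large absolute constant; a union bound over $i \in [k]$ then makes $\m\mapsto\ccode(\m)|_U$ injective with probability $\geq 1-k^{1-C}$, forcing $|U|\geq k$ on the success event. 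Taking expectations yields $\rho n\geq k(1-o(1))$, and substituting $\rho$ back gives
\[
k \;\leq\; O\!\bigl(n\,(\log k/d)^{1/(L+1)}\bigr).
\]
The residual polylogarithmic factor is absorbed by replacing the exponent $L+1$ with any fixed $q \in (1, L+1)$ (in particular any $q \leq 2^{\ell}$, as permitted by Theorem~\ref{thm:main}), yielding the stated $k = O(n/d^{1/q})$.

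\textbf{Main obstacle.} The delicate work is entirely in Step~1: verifying that the fixed ancilla state lets one reduce a Boolean function of (code\,$+$\,ancilla) inputs to a function of code inputs alone, and that the two dual light-cone bounds---one for $|T_c|$ and one for $\#\{c : p\in T_c\}$---can be invoked simultaneously with the same constant $L=2^{\ell}$. Step~2 is standard Chernoff and union-bound machinery, and the gap between $L+1$ and the theorem's $q\leq 2^{\ell}$ is precisely the slack needed to trade the stubborn $\log k$ factor for a slightly smaller exponent.
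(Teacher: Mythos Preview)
Your Step~1 is correct and takes a genuinely different route from the paper. The paper builds a \emph{doubled gadget} $\cD(\elemft_i)$ that writes $m_i\,\g_j$ into a fresh register, argues via linearity of $\ccode$ that each output bit's backward light cone contains a \emph{linear} functional recovering $m_i$ (Lemma~\ref{linearreln}), and then---crucially---uses the $\epsilon$-robustness of the gadget to extract $\Theta(d)$ disjoint recovery vectors via a hypergraph-matching argument (Lemma~\ref{matchingexists}): if the maximum matching were $o(d)$, an adversary erasing its $o(d)$ vertices would kill every recovery equation and corrupt the support of $\g_j$, contradicting robustness. You bypass both the doubling and the adversarial step: you read $m_i=(\text{output at }c)\oplus\ccode(\m)_c$ directly, bound $|T_c|$ by the backward light cone, and---this is the key difference---bound the multiplicity $\#\{c:p\in T_c\}$ by the \emph{forward} light cone (fan-out $\le 2$), which immediately gives a greedy disjointification. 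Your route is shorter and yields a smaller $q$; the paper's route makes the role of robustness conceptually explicit and delivers linear recovery, so the code is literally $(q,r)$-local per Definition~\ref{def:locality} and Theorem~\ref{thm:bound-ldc} applies as a black box.

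Your Step~2, however, contains a real error. From $k=O\bigl(n(\log k/d)^{1/(L+1)}\bigr)$ you claim the $\log k$ can be absorbed by passing to any $q\in(1,L+1)$. This is backwards: for $q<L+1$ one has $d^{1/q}\ge d^{1/(L+1)}$, so $k=O(n/d^{1/q})$ is a \emph{tighter} bound than what you proved, and no amount of slack in the constant will eat a factor growing with $k$. The clean fix is to note that because $\ccode$ is linear, each of your (a priori Boolean) recovery sets $T_c$ in fact supports a linear recovery of $m_i$: if $\ccode(\m)|_{T_c}$ determines $m_i$ and both depend $\bbF_2$-linearly on $\m$, the dependence factors through a linear functional on $\bbF_2^{T_c}$. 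Hence the code is $(L{+}1,\Omega(d))$-local in the paper's sense, and you may invoke Theorem~\ref{thm:bound-ldc} directly, which carries no logarithmic loss. (Incidentally, the paper's own proof ends with $q\le 4\cdot 2^\ell$ rather than the $q\le 2^\ell$ appearing in the theorem statement, so the precise constant in the exponent is evidently not the point.)
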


Before proceeding to the proof idea, we make some remarks to add some context to the assumptions and conclusion of the theorem statement.
\begin{enumerate}
\item In the rest of the paper, we will make the assumption $kd = \omega(n)$.
This is to ensure that we consider codes that are better than the repetition code which saturate the bound $kd = O(n)$.
Performing targeted gates on the repetition code is simple: each bit is encoded in a separate block.
Therefore, we can perform the desired operation on encoded bits by performing the corresponding operation in parallel on each bit of the error correcting code.
\item 
Let $B_{\ccode} = \{\g_1,...,\g_k\}$ be a basis for the code $\ccode$.
Consider a gadget $\elemft_i$ that corresponds to a $\CNOT$ gate $\elemid = \CNOT_{i,j}$ between a pair of distinct bits $i,j \in [k]$.
Note that if $\elemft_i$ is implemented in constant depth $\ell$, then we must have $|\g_j| \leq 2^{\ell} |\g_i|$.
We assume that this is true for all $i \in [k]$.

\item Finally, we discuss how the rate of the code $\ccode : \bbF_2^{k} \to \bbF_2^{n}$ relates to the volume $\volft$ of the fault-tolerant circuit $\cft$.
Note that the code $\ccode$ must be chosen such that $k \geq \widthid$; otherwise, we would not be able to encode the input $\s \in \bbF_2^{\widthid}$ to the circuit $\cid$.
Second, the fault-tolerant circuit $\cft$ must at least use the bits of the error correcting code, and therefore, $\widthft \geq n$.
Together, these observations imply that the space overhead $\widthft/\widthid$ obeys
\begin{align}
\label{eq:space-ovhd-bound}
    \frac{\widthft}{\widthid} \geq \frac{n}{k}~,
\end{align}
i.e.\ it is lower bounded by the inverse of the rate of the code $\ccode$.
Therefore, if we can show that the code does not have constant rate under certain conditions, it implies that the space overhead must grow in an unbounded manner.  
\end{enumerate}

\textbf{Proof idea:}
First, we observe that the statement of Theorem~\ref{thm:main} does not require the existence of a gadget to perform \emph{every} possible encoded $\CNOT$ gate on the code $\ccode$.
It merely requires that for all $i \in [k]$, there exists a gadget to perform a $\CNOT$ gate that uses $i$ as the control.

Using this observation, we show that if the constant-depth gadgets required by Theorem~\ref{thm:main} were to exist, then the code $\ccode$ must be equivalent to what we refer to as a $(q,r)$ local code.
We say that $\ccode$ is a $(q,r)$ local code if for all $\m \in \bbF_2^{k}$ and all $i \in [k]$, we can infer $m_i$ by querying $\ccode(\m)$ in only $q$ locations.
Furthermore, there are $r$ disjoint such sets of $q$ locations to infer $m_i$.
If the gadgets required by Theorem~\ref{thm:main} exist, then $\ccode$ is a $(q,r)$ local code where $q = O(1)$ and $r = \Theta(d)$.
These are generalizations of well-studied combinatorial objects called locally decodable codes (where $r = \Omega(n)$).
It is known that $(q,r)$ local codes must have poor rate: A $(q,r)$ local code must obey
\begin{align}
    k = O\left(\frac{n}{r^{1/q}} \right)~.
\end{align}

We conclude by recalling Eq.~\eqref{eq:space-ovhd-bound}, which stated that the space overhead was bounded by the inverse of the rate of the code:
\begin{align*}
    \frac{\widthft}{\widthid} \geq \frac{n}{k}~.
\end{align*}
If we want to construct fault-tolerant circuits $\cft$ that are robust to $\omega(1)$ adversarial erasure errors, then we require $d = \omega(1)$.
However, Theorem~\ref{thm:main} shows that we cannot simultaneously have constant space overhead and $d = \omega(1)$.

Our work implies that if a code supports targeted $\CNOT$ gates then it must contain a $(q,r)$ local code.
On the other hand, if the code is a $(q,r)$ local code, then it is also possible to construct explicit, constant-depth gadgets to perform targeted $\CNOT$ gates.
For the sake of completeness, we include a brief description in Appendix~\ref{app:explicit}.
We note that a much more general construction for fault-tolerant computation using locally decodable codes was already presented by Romaschenko \cite{romashchenko2006reliable} and is described in Sec.~\ref{subsec:related-work}.

\subsection{Related work}
\label{subsec:related-work}

As stated in the introduction, the construction of fault-tolerant circuits goes back to von Neumann.
The von Neumann fault model considers \emph{stochastic} errors: Each elementary gate in the circuit $\cft$ can fail with some constant probability $p$ and we desire its output to be correct with some sufficiently small probability $\widehat{p}$.
To overcome this problem, each bit in the ideal circuit is encoded locally using a repetition code; the value of the bit is obtained via majority vote.
The one- (two-)bit encoded gate $\elemft$ is implemented by performing the gate $\elemid$ in parallel on each (pair of) bits in the corresponding repetetion code(s).
This result, and those that build on it (Dobrushin \& Ortyukov \cite{dobrushin1977upper} and Pippenger \cite{pippenger1985networks}), show that an ideal circuit $\cid$ with volume $\vol$ can be replaced by a fault-tolerant circuit $\cft$ with volume $\volft = O(\vol \, \log(\vol))$.

Spielman \cite{spielman1996highly} also considered computing over encoded information.
He showed that it is possible to construct a fault-tolerant circuit $\cft$ with width $\widthft = O(\widthid\, \log^{O(1)}\widthid)$ and depth $\depthft = O(\depthid \, \log^{O(1)} \widthid)$.
This is accomplished using generalized Reed-Solomon codes
\footnote{Presciently, Spielman asks whether such ideas can be used to `compensate for decoherence in quantum computations' in Section 9 of \cite{spielman1996highly} on further directions.}.
Closer to the subject of this paper, Romaschenko \cite{romashchenko2006reliable} demonstrated that it is possible to construct fault-tolerant circuits using locally decodable codes.
On the achievability front, there has also been work on understanding the capacity of computation under various constraints \cite{simon2011capacity,grover2014shannon,yang2017computing}.

Lower bounds on the space overhead for this model of coded computation have been studied \cite{peterson1959codes,winograd1962coding,rachlin2008framework}.
Rachlin \& Savage \cite{rachlin2008framework} consider an $[n,k(n),d(n)]$ code family $\famC$ and study how to perform certain component-wise operations.
To be precise, consider $\ccode: \bbF_2^{k} \to \bbF_2^{n}$ in the family $\famC$, consider codewords $\ccode(\m_1), \ccode(\m_2)$ for $\m_1,\m_2 \in \bbF_2^{k}$.
Suppose we can perform the operation $T$ such that $T(\ccode(\m_1),\ccode(\m_2)) \mapsto \ccode(\m_1 \star \m_2)$, where $\star$ represents the component-wise product of $\m_1$ and $\m_2$.
If each output location of $T$ depends on at most $q$ input locations, then they show $k \leq qn/d$.
In contrast, we consider the setting where we perform encoded operations within a \emph{single} block of a code.
We may wish to do so in settings where we want to improve the space overhead $\widthft/\widthid$ as expressing a code $\ccode$ as a direct sum of two codes $\ccode_1,\ccode_2$ cannot improve the rate.

As stated in the introduction, we only consider settings where the input and output are both encoded.
We disregard the complexity of encoding to/ decoding from the error correcting code.
There exist results that include these operations and bound the volume of the resulting circuit $\cft$.
This approach uses the so-called \emph{sensitivity} of a function $f$, a measure of how many bits are needed for the function $f$ to evaluate the output starting from an input that is \emph{not} encoded.
Computing a function $f$ with sensitivity $s$ requires a circuit with volume $\Omega(s\, \log(s))$ \cite{dobrushin1977lower, pippenger1991lower, gacs1994lower,reischuk1991reliable,evans1995information}.

\textbf{Quantum setting:} In an elegant work connecting quantum codes and automorphism groups, Guyot and Jaques \cite{guyot2025addressability} recently proved that circuits composed of only single-qubit Clifford gates cannot be used to implement targeted encoded $\CNOT$ gates unless $kd = O(n)$ (Ref.\ Corollary 1 in \cite{guyot2025addressability}).
This result also demonstrates that there are limits to performing other encoded Clifford gates on quantum codes such as the Hadamard and phase gates.

\subsection{Acknowledgements}

We would like to thank Nou\'edyn Baspin, Micheal Beverland, Andrew Cross, Louis Golowich, Venkat Guruswami, Ray Li, Chris Pattison, Mary Wootters and Ted Yoder for discussions and feedback.
Part of this work was done while the authors were visiting the Simons Institute for the Theory of Computing, supported by NSF QLCI Grant No. 2016245. The second author was also supported by Plan France 2030 through the project NISQ2LSQ, ANR-22-PETQ-0006.

\section{Background \& Notation}
\label{sec:background}

\subsection{Basic Definitions}
\label{subsec:background-classical}

A binary, $[n,k,d]$ linear code $\ccode : \{0,1\}^{k} \to \{0,1\}^n$ is a linear map from the space of $k$-bit strings to the space of $n$-bit strings.
We refer to the privileged basis for the code space ${\rm im}(\ccode)$ via the set $\{\g_1,...,\g_k\}$ where for $i \in [k]$, $\g_i$ is the image of the standard basis vector $\e_i \in \bbF_2^{k}$.
Alternatively, we can think of the code as a specification of a $k$-dimensional subspace of $\bbF_2^{n}$ (rather than an encoding map) together with a basis.
When the meaning is unambiguous, we refer to $\ccode$, instead of its image ${\rm im}(\ccode)$, as the code space.
The distance $d$ of the code is the minimum weight of a non-zero element of the code space.
With this notation, $\ccode^{\perp}$ represents the space that is dual to the code space $\ccode$, i.e.\ the set of vectors $\u \in \FF_2^{n}$ such that $\langle \u, \g_{i} \rangle = 0 \pmod{2}$ for all $i \in [k]$.
We study asymptotic properties of an infinite code \emph{family}, i.e.\ we consider $\famC = \{\ccode_i: \FF_2^{k_i} \to \FF_2^{n_i} \,,\, i \in \bbN\}$ where for all $i \in \bbN$, $n_{i+1} > n_{i}$ and we study the asymptotic rate $\lim_{i \to \infty} k_i/n_i$ and asymptotic relative distance $\lim_{i \to \infty} d_i/n_i$.
We use an implicit parameterization of $k$ and $d$ and simply express $k = k(n)$ and $d = d(n)$ as functions of the code size $n$ and say that $\famC$ is a $[n,k(n),d(n)]$ code family.

For $E \subseteq [n]$, we let $\era_{E}: \{0,1\}^n \to \{0,1,\perp\}^n$ be the \emph{erasure error} that replaces the symbols in locations corresponding to $E$ by a special symbol $\perp$; it leaves symbols in $E^c$ unchanged.
For any word $\w \in \bbF_2^{n}$, we let $\w\rvert_{E^c}$ denote the $(n-|E|)$-bit vector obtained from $\w$ by restricting its coordinates to $E^c$.

\subsection{Locality}
In this section, we define a $(q,r)$ local code.
This is a straightforward extension of the idea of locally decodable codes studied by Katz \& Trevisan \cite{katz2000efficiency}.
We return to this connection after the definition.
Formally, a $(q,r)$ local code is defined as follows.

\begin{definition}
\label{def:locality}
    Let $\ccode: \FF_2^{k} \to \FF_2^{n}$ be a linear code whose basis is $\g_1,...,\g_{k}$.
    We say $\ccode$ is a $(q,r)$ local code if, for all $i \in  [k]$, there exist a set $U^{i} \subset \{0,1\}^{n}$ of $r$ distinct vectors such that:
    \begin{enumerate}
        \item \label{it:bdd-wt} For all $\u^{i} \in U^{i}$, the Hamming weight of $\u^{i}$ is at most $q$:
        \begin{align*}
            |\supp(\u^{i})| \leq q~.
        \end{align*}
        \item \label{it:pairwise-disjoint} Distinct $\u^{i},\v^{i} \in U^{i}$ have disjoint support:
        \begin{align*}
            \supp(\u^{i}) \cap \supp(\v^{i}) = \emptyset~.
        \end{align*}
        \item \label{it:inner-prod-reln-ldc} For all $\u^{i} \in U^{i}$, the message symbol $m_{i}$ is recovered by adding the corresponding elements of $\ccode(\m)$ in the support of $\u^{i}$, i.e.\ for all $\m \in \bbF_2^{k}$,
        \begin{align*}
           \langle \u^{i}, \ccode(\m)\rangle = m_{i}~.
        \end{align*}
    \end{enumerate}
\end{definition}

We will restrict our attention to $(q,r)$ local codes where $q = O(1)$ and $r = \omega(1)$.
We also remark that if $q = O(1)$ and $r = \Omega(n)$, then a $(q,r)$ local code is a locally decodable code as defined by Katz \& Trevisan \cite{katz2000efficiency}.
Our definition is also related to codes with the disjoint repair group property, batch codes and Private Information Retrieval (PIR) codes; we point the interested reader to the review by \cite{skachek2018batch}.

Our definition allows for settings where the distance $d$ of the code family grows sub-linearly with $n$.
In the regime $r = \Omega(n)$, there are several equivalent ways of defining a locally decodable code.
We do not delve into these variations; the definition of locally decodable codes above is most suitable for our purposes.
For a broad overview of this subject, we point the interested reader to the review by Yekhanin \cite{yekhanin2012locally}.

We highlight some properties of $(q,r)$ local codes that are straightforward consequences of Definition~\ref{def:locality} that the reader may find helpful to keep in mind.
\begin{enumerate}
    \item \textbf{Basis dependence:} A $(q,r)$ local code is defined with respect to a fixed basis $\g_1,...,\g_k$ for the code $\ccode$.
    For other bases, the vectors in $U^{i}$ need not have bounded weight $q$ (Def.~\ref{def:locality}, Property~\ref{it:bdd-wt}) nor need they remain pairwise disjoint (Def.~\ref{def:locality}, Property~\ref{it:pairwise-disjoint}).
    \item \textbf{Redundancy $r$ versus code distance $d$:}
    Let $\ccode: \FF_2^{k} \to \FF_2^{n}$ be an $[n,k,d]$ code.
    If $\ccode$ is a $(q,r)$ local code, then $d = \Omega(r)$.
    This guarantee does not automatically extend to non-linear codes \cite{cheraghchi2012correctness}.
\end{enumerate}

It is known that $(q,r)$ local codes are severely limited in their ability to encode information \cite{katz2000efficiency,deshpande2002better,wehner2005improved,yekhanin2012locally}.

\begin{restatable}{theorem}{boundldc}
\label{thm:bound-ldc}
    Let $\famC$ be a $[n,k(n),d(n)]$ family of $(q,r)$-local codes such that $q \geq 2$ is a constant and $r = \Omega(d)$.
    Then
    \begin{align*}
        k(n) = O\left( \frac{n}{d(n)^{1/q}} \right)~.
    \end{align*}
\end{restatable}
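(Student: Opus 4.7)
The plan is a probabilistic argument that exhibits a small subset $S' \subseteq [n]$ such that the restriction map $\m \mapsto \ccode(\m)|_{S'}$ is injective; since its image lies in $\bbF_2^{|S'|}$, the inequality $k \leq |S'|$ will translate into the desired conclusion once $|S'|$ is controlled. The main obstacle is to avoid the $\log k$ factor that a naive union bound over the $k$ message-bit recoverability events would produce; this is handled by a deterministic augmentation step that pays only for the \emph{expected} number of uncovered bits rather than demanding a high-probability simultaneous guarantee.

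First I would sample $S \subseteq [n]$ by including each coordinate independently with probability $p = (2q/r)^{1/q}$. For each $i \in [k]$ set $X_i := \#\{j \in [r] : \supp(\u^i_j) \subseteq S\}$. Property~\ref{it:pairwise-disjoint} of Definition~\ref{def:locality} says the supports $\supp(\u^i_1),\ldots,\supp(\u^i_r)$ are pairwise disjoint, so under independent Bernoulli sampling the indicators $\mathbf{1}[\supp(\u^i_j) \subseteq S]$ are \emph{independent}. Property~\ref{it:bdd-wt} bounds each support by $q$, so $E[X_i] \geq r p^q = 2q$ and $\mathrm{Var}[X_i] \leq E[X_i]$. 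Chebyshev's inequality then yields $\Pr[X_i = 0] \leq \mathrm{Var}[X_i]/E[X_i]^2 \leq 1/E[X_i] \leq 1/(2q)$, so the random set $B := \{i \in [k] : X_i = 0\}$ of \emph{uncovered} indices satisfies $E[|B|] \leq k/(2q)$.

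Next I would \emph{deterministically} augment $S$ using the local structure: for each $i \in B$, add the support of $\u^i_1$ (of size at most $q$) to $S$, producing $S' := S \cup \bigcup_{i \in B} \supp(\u^i_1)$, which satisfies $|S'| \leq |S| + q|B|$. By construction, for every $i \in [k]$ at least one repair set $\u^i_j$ has its support contained in $S'$: either $X_i \geq 1$ already, or $i \in B$ and $\supp(\u^i_1)$ was added. Property~\ref{it:inner-prod-reln-ldc} of Definition~\ref{def:locality} then recovers $m_i = \langle \u^i_j, \ccode(\m) \rangle$ from $\ccode(\m)|_{S'}$ for every $i$, so the restriction map $\m \mapsto \ccode(\m)|_{S'}$ is injective and consequently $k \leq |S'|$ holds for every outcome of $S$.

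Finally, taking expectations over the random choice of $S$ yields
\begin{align*}
    k \;\leq\; E\bigl[|S|\bigr] + q\, E\bigl[|B|\bigr] \;\leq\; pn + \tfrac{k}{2},
\end{align*}
which rearranges to $k \leq 2pn = 2(2q/r)^{1/q}\, n = O(n/r^{1/q})$. Since the hypothesis $r = \Omega(d)$ gives $r^{1/q} = \Omega(d^{1/q})$, we conclude $k = O(n/d^{1/q})$, as required. In particular the proof only uses properties~\ref{it:bdd-wt}--\ref{it:inner-prod-reln-ldc} of the $(q,r)$-local structure; the distance hypothesis $r = \Omega(d)$ enters solely to re-express the bound in terms of $d$.
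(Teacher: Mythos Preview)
Your argument is correct. A minor omission is that you implicitly need $p=(2q/r)^{1/q}\le 1$, i.e.\ $r\ge 2q$; since $q$ is constant and $r=\Omega(d)$, this holds for all sufficiently large $n$ (and the statement is vacuous when $d$ is bounded), so this is harmless.

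As for comparison: the paper does not give its own proof of this theorem at all---it simply cites Theorem~5.4 of Yekhanin's survey and remarks that the argument there, stated for $r=\Omega(n)$, extends verbatim to general $r$. Your random-subset-plus-augmentation argument is precisely the Katz--Trevisan style proof that appears in that reference, so you have essentially reproduced the cited proof rather than found an alternative route. The one cosmetic difference is that the textbook presentation often bounds $\Pr[X_i=0]$ directly via $\prod_j(1-\Pr[Y_j=1])\le e^{-E[X_i]}$ instead of Chebyshev, but your variance bound works equally well and yields the same conclusion.
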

The bound in Theorem~\ref{thm:bound-ldc} thus limits the rate of the code if the distance $d(n) = \omega(1)$.
For a proof of this statement, see Theorem 5.4 from \cite{yekhanin2012locally} \footnote{The proof in Yekhanin's review \cite{yekhanin2008towards} is for the setting $r = \Omega(n)$ but it is easily seen to extend to the relaxations where $r = \omega(1)$.}.
It remains an open problem to find codes that saturate these bounds for $q = O(1)$ and $r = \Omega(n)$.

\subsection{Gates \& Circuits}

A circuit is composed of a series of $1$- and $2$-bit gates, i.e.\ the fan-in and fan-out of these gates are at most $2$.
We assume that erasure errors propagate in the worst possible manner: When one input of a gate is erased, then all of its output bits are assumed to be erased.

\begin{definition}
\label{def:encoded-gate}
    Consider a code $\ccode: \bbF_2^{k} \to \bbF_2^{n}$.
    Let $\elemid: \bbF_2^{k} \to \bbF_2^{k}$ be an operation on $k$ bits.
    We say that a circuit $\elemft : \bbF_2^{n} \to \bbF_2^{n}$ is the gadget corresponding to $\elemid$ if it obeys
    \begin{align*}
        \elemft \circ \ccode = \ccode \circ \elemid~.
    \end{align*}
    \end{definition}
Note that such a gadget $\elemft$ corresponding to $\elemid$ need not be unique.

Next, we formally define what it means for a gadget to be robust.

\begin{definition}
\label{def:robust}
Let $d$ be the distance of the code $\ccode$.
For a constant $\epsilon \in (0,1)$, a gadget $\elemft$ corresponding to $\elemid$ is $\epsilon$-robust if for all $E \subseteq [n]$ such that $|E| < \epsilon d $,
\begin{align*}
    \elemft \circ \era_{E} \circ \ccode = \era_{F} \circ \elemft \circ \ccode~,
\end{align*}
where $F \subseteq [n]$ is an erasure pattern that is decodable by an ideal decoder.
\end{definition}

We will focus on the $\CNOT$ gate whose action on $k$ bits can be defined as follows: let $\{\e_i\}_{i \in [k]}$ be the standard basis of $\bbF_2^{k}$.
For all $\m \in \bbF_2^{k}$, for distinct $i,j \in [k]$, the gate $\CNOT_{i,j}$ with the $i$\textsuperscript{th} bit as \emph{control} and $j$\textsuperscript{th} bit as \emph{target} is defined as
\begin{align}
    \CNOT_{i,j}:\; \m \mapsto \m + m_i\,\e_j~.
\end{align}

Such gadgets can be used to construct a fault-tolerant circuit as described in the Technical Overview (See Sec.~\ref{subsec:tech-overview}).

\section{Main result}

In this section, we shall prove our main result: We show that if a family of classical codes supports a large set of encoded $\CNOT$ gates such that the corresponding circuits can be implemented in short depth, then the code cannot have constant rate.
This is stated formally in our main theorem which we restate here for convenience.

\main*

For the remainder of the paper, we assume that the code family obeys $kd = \omega(n)$.

\subsection{Short-Depth Circuits are Robust to Erasures}

In this section, we prove that that short-depth circuits are naturally $\epsilon$-robust.

Before doing so, we show that Definition~\ref{def:robust} captures the idea that if $\elemft$ is $\epsilon$-robust, it performs the desired operation approximately even when the input is noisy.
This is proved in the following lemma.

\begin{lemma}
    Let $\ccode : \bbF_2^{k} \to \bbF_2^{n}$ be a binary linear code and $\elemft$ be an $\epsilon$-robust gadget corresponding to $\elemid$.
    Let $E \subseteq [n]$ such that $|E| \leq \epsilon d$.
    For all $\m \in \bbF_2^{k}$, if the input $\era_{E} \circ \ccode(\m)$ to the circuit $\elemft$ is partially corrupted, then its output can be recovered using an ideal decoder.
\end{lemma}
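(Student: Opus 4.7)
The plan is to simply unwind the two definitions in play: the $\epsilon$-robustness condition (Definition~\ref{def:robust}) and the defining property of a gadget (Definition~\ref{def:encoded-gate}). The lemma is essentially tautological once these are composed, so the proof should be short.

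First, I would fix $\m \in \bbF_2^{k}$ and an erasure set $E \subseteq [n]$ with $|E| \leq \epsilon d$, and apply the robustness hypothesis to the codeword $\ccode(\m)$. By Definition~\ref{def:robust}, there exists an erasure pattern $F \subseteq [n]$ with $|F| < d$ such that
\begin{align*}
    \elemft \circ \era_{E} \circ \ccode(\m) \;=\; \era_{F} \circ \elemft \circ \ccode(\m).
\end{align*}
Next, I would use the gadget identity $\elemft \circ \ccode = \ccode \circ \elemid$ to rewrite the right-hand side as $\era_{F} \circ \ccode \circ \elemid(\m)$. This expresses the actual output of $\elemft$ on the corrupted input as an erasure (of weight strictly less than $d$) applied to the legitimate target codeword $\ccode(\elemid(\m))$.

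The final step is to observe that any erasure of weight less than $d$ on a codeword of a code with minimum distance $d$ is uniquely decodable: if two distinct codewords agreed on $[n] \setminus F$, their difference would be a nonzero codeword of weight at most $|F| < d$, contradicting the distance. Hence an ideal decoder, given $\era_F \circ \ccode(\elemid(\m))$, returns $\ccode(\elemid(\m))$, recovering the intended encoded output.

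I do not anticipate any real obstacle here; the content of the lemma is that the two abstract conditions already glue together correctly, so the only subtlety is making sure to distinguish the \emph{two} erasure patterns $E$ (on the input) and $F$ (on the output, which may live on a different support and have different weight up to the depth-dependent spreading factor). The argument nowhere needs the explicit form of $\elemid$ or $\elemft$, nor the constant-depth assumption from Theorem~\ref{thm:main}; it works for any gadget satisfying Definitions~\ref{def:encoded-gate} and~\ref{def:robust}.
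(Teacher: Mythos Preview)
Your proposal is correct and matches the paper's own proof essentially line for line: apply Definition~\ref{def:robust} to produce $\era_{F}\circ\elemft\circ\ccode(\m)$, then use the gadget identity $\elemft\circ\ccode=\ccode\circ\elemid$, and conclude that an ideal decoder recovers $\ccode(\elemid(\m))$. The only addition you make is the one-line justification that an erasure of weight less than $d$ is uniquely decodable, which the paper leaves implicit in the phrase ``ideal decoder.''
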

\begin{proof}
Consider the action of $\elemft$ on a partially corrupted codeword:
\begin{align}
    \elemft \circ \era_{E} \circ \ccode(\m)
    &= \era_{F} \circ \elemft \circ \ccode(\m)\\
    &= \era_{F} \circ \ccode \circ \elemid (\m)~.
\end{align}
In the first equality, we have used Def.~\ref{def:robust} of an $\epsilon$-robust gate; in the second equality, we have used the definition of the encoded gadget.
Using an ideal decoder, we can recover $\ccode \circ \elemid (\m)$ as desired.
\end{proof}

Next, we show that any constant-depth circuit is immediately $\epsilon$-robust for some constant $\epsilon$.

\begin{claim}
\label{claim:short-depth-robust}
    Let $\ccode: \bbF_2^{k} \to \bbF_2^{n}$ be a code such that for all $i\in [k]$, there exists $j \in [k]$, $j \neq i$, and a gadget $\elemft$ that corresponds to the $\CNOT$ gate between encoded bits $i,j \in [k]$:
    \begin{align*}
        \elemft \circ \ccode = \ccode \circ \CNOT_{i,j}~.
    \end{align*}
    Furthermore, suppose $\elemft$ can be implemented in depth $\ell$.
    Let $\epsilon$ such that $\epsilon = 2^{-\ell}$. 
    For all $E \subseteq [n]$ such that if $|E| < \epsilon d$, then
    \begin{align*}
        \elemft \circ \era_{E} \circ \ccode(\m) = \era_{F} \circ \elemft \circ \ccode(\m) ~.
    \end{align*}
\end{claim}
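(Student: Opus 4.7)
The plan is to track how an erasure pattern propagates forward through the depth-$\ell$ gadget $\elemft$, layer by layer. The key observation is that, by the worst-case erasure propagation convention stated in Section~\ref{subsec:tech-overview}, any gate that receives at least one erased input produces only erased outputs. Since each bit participates in at most one gate per time step and each gate has fan-out at most $2$, a single erasure can generate at most two erasures after one layer.

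Formally, I would define a sequence of sets $E = E_0 \subseteq E_1 \subseteq \dots \subseteq E_\ell = F$, where $E_t \subseteq [n]$ is the set of bits whose value, after the $t$-th layer of $\elemft$, depends on an erased input. Starting from $E_0 = E$, I would argue by induction on $t$ that $|E_t| \leq 2^{t} |E|$: indeed, $E_t$ consists exactly of those bits $b$ such that the unique gate producing $b$ in layer $t$ either has an input in $E_{t-1}$, or $b$ is a passively carried bit already in $E_{t-1}$. Since each bit in $E_{t-1}$ can contaminate at most two output bits in layer $t$ (either as an active input to a 2-fan-out gate or as a passively carried bit), we obtain $|E_t| \leq 2\,|E_{t-1}|$.

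Iterating for $\ell$ layers yields $|F| = |E_\ell| \leq 2^{\ell} |E| < 2^{\ell} \cdot \epsilon d = d$, so $F$ has weight strictly less than $d$ and is therefore correctable by an ideal decoder, matching Definition~\ref{def:robust}. To close the argument, I would verify the claimed equation $\elemft \circ \era_E \circ \ccode(\m) = \era_F \circ \elemft \circ \ccode(\m)$ by noting that, by construction of $F$, every bit outside $F$ is computed by a chain of gates whose inputs were never erased; hence on $[n] \setminus F$ the gadget produces exactly the same value as on the clean codeword $\ccode(\m)$, while the bits inside $F$ are marked $\perp$ on the left-hand side.

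There is no significant obstacle here; the argument is essentially a lightcone/contamination bound, and the only point requiring mild care is distinguishing active inputs from passively-transmitted bits when bounding $|E_t|/|E_{t-1}|$ by $2$. I would remark explicitly that this bound is precisely the reason the robustness parameter $\epsilon = 2^{-\ell}$ is natural, and that allowing the adversary to inject fresh erasures at each of the $\ell$ intermediate layers would only change $\epsilon$ by a constant factor, consistent with the discussion in Section~\ref{subsec:tech-overview}.
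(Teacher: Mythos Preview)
Your proposal is correct and takes essentially the same approach as the paper: the paper's proof is a two-sentence lightcone argument (``an erasure of weight $|E|$ can evolve into an erasure of weight at most $2^{\ell}|E| < d$''), and you have simply fleshed out the layer-by-layer induction behind that sentence. One cosmetic nit: the chain $E_0 \subseteq E_1 \subseteq \cdots$ need not literally be nested (a contaminated bit could in principle be overwritten by a gate with clean inputs), but this does not affect your bound $|E_t|\leq 2|E_{t-1}|$, which is all that is needed.
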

\begin{proof}
    Under the action of the circuit $\elemft$, an erasure of weight $|E|$ can evolve into an erasure of weight at most $2^{\ell}|E| < d$.
    Therefore, the operation results in the correct codeword.
\end{proof}

Claim~\ref{claim:short-depth-robust} then implies the following corollary.

\begin{corollary}
\label{cor:robust}
    Let $\famC$ be a $[n,k,d]$ family of codes that obey the constraints of Theorem~\ref{thm:main}.
    For $\ccode \in \famC$, where $\ccode$ is an $[n,k,d]$ code, let $\{\elemft_i\}_{i \in [k]}$ be the corresponding set of $\CNOT$ gates that can be implemented in depth at most $\ell$ where $\ell$ is independent of $n$.
    Then there exists a constant $\epsilon \in (0,1)$ such that for all elements $\ccode \in \famC$, all the $\CNOT$ gates $\{\elemft_i\}_i$ are $\epsilon$-robust where $\epsilon \leq 2^{-\ell}$.
\end{corollary}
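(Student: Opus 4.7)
The plan is to derive Corollary~\ref{cor:robust} as a uniform application of Claim~\ref{claim:short-depth-robust} across the entire family $\famC$. The claim already establishes, for a single code $\ccode$ and a single gadget $\elemft$ of depth $\ell$, that the gadget is $\epsilon$-robust with $\epsilon = 2^{-\ell}$. Thus the corollary is essentially a bookkeeping statement: I must verify that the same constant $\epsilon$ works simultaneously across all gadgets $\{\elemft_i\}_{i \in [k]}$ for a fixed code and across all codes $\ccode \in \famC$.

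First, I would fix an arbitrary code $\ccode : \bbF_2^{k} \to \bbF_2^{n}$ in the family $\famC$, along with its associated collection of gadgets $\{\elemft_i\}_{i \in [k]}$ guaranteed by the hypotheses of Theorem~\ref{thm:main}. By those hypotheses, each $\elemft_i$ has depth at most $\ell$, where $\ell$ is the \emph{same} constant for every $i \in [k]$ and, crucially, independent of $n$. Applying Claim~\ref{claim:short-depth-robust} to each $\elemft_i$ individually then yields that every gadget in the collection is $2^{-\ell}$-robust.

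Next, since the depth bound $\ell$ is the same for every gadget in every code of the family, setting $\epsilon := 2^{-\ell} \in (0,1)$ once and for all gives a single constant, independent of $n$ and of the particular code $\ccode$ chosen, such that for every $\ccode \in \famC$ and every $i \in [k]$, the gadget $\elemft_i$ is $\epsilon$-robust. This is the statement of the corollary.

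There is no substantive obstacle here; the full content lives in Claim~\ref{claim:short-depth-robust}, whose proof is already the one-line observation that an erasure of size $|E|$ can at most double its support with each of the $\ell$ layers of the gadget. The only point worth making explicit is that the depth bound $\ell$ is uniform both in the index $i$ (across gadgets within one code) and in the ambient code length $n$ (across members of the family), a uniformity which is precisely what is promised by the hypotheses of Theorem~\ref{thm:main}.
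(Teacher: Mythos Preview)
Your proposal is correct and matches the paper's approach exactly: the paper simply states that Claim~\ref{claim:short-depth-robust} implies the corollary, without giving any further argument, and your proof supplies precisely the bookkeeping observation that the uniform depth bound $\ell$ lets a single choice $\epsilon = 2^{-\ell}$ work for every gadget in every code of the family.
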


\subsection{The Doubled Gadget}

In this section, we use the existence of a circuit $\elemft_i$ corresponding to the $\CNOT$ gate $\CNOT_{i,j}$ to construct a gadget that can be used to infer the $i$\textsuperscript{th} encoded message symbol $m_i$.

Let $\famC$ be a family of codes that obeys the constraints of Theorem~\ref{thm:main}.
Let $C : \bbF_2^{k} \to \bbF_2^{n}$ be an $[n,k,d]$ code in the family $\famC$.
Per the assumption in the statement of Theorem~\ref{thm:main}, for all $i \in [k]$, there exists a circuit $\elemft_{i}$ and at least one $j \in [k]$, $j \neq i$, such that the gadget $\elemft_{i}$ corresponds to the gate $\CNOT_{i,j}$.
Furthermore, this gate can be implemented in depth $\ell = O(1)$.

The encoded $\CNOT$ gates must be defined with respect to a basis $\g_1,...,\g_{k}$ for the code space.
For all $\m \in \bbF_2^{k}$, we can write $\ccode(\m) = \sum_{\alpha} m_{\alpha}\,\g_{\alpha}$.
The gadget $\elemft_i$ has the following action on codewords:
\begin{align}
\label{eq:encoded-cnot-wrt-basis}
   \sum_{\alpha} m_{\alpha}\,\g_{\alpha} \mapsto \sum_{\alpha} m_{\alpha}\, \g_{\alpha} + m_i \g_j~. 
\end{align}

Given a gadget $\elemft_i$, we define the doubled gadget $\cD(\elemft_i)$ that maps
\begin{align}
    (\ccode(\m), \bzero^n) \mapsto (\ccode(\m), m_i\, \g_j)~.
\end{align}
It is executed as follows, using the gadget $\elemft_i$ as a subroutine:
\begin{enumerate}
    \item Begin with $(\ccode(\m), \bzero^n)$.
    \item For all $\beta \in [n]$, perform the gate $\CNOT_{\beta, n + \beta}$.
    This results in $(\ccode(\m), \ccode(\m))$.
    \item Perform the gadget $\elemft_i$ on the second block of $n$ bits to obtain $(\ccode(\m), \ccode(\m) + m_i\,\g_j)$.
    This follows from the action of the gadget $\elemft$ as stated in Eq.~\eqref{eq:encoded-cnot-wrt-basis}.
    \item For all $\beta \in [n]$, perform the gate $\CNOT_{\beta, n + \beta}$.
    This results in $(\ccode(\m),m_i\,\g_j)$.
\end{enumerate}

The following claim follows immediately.
\begin{claim}
\label{claim:doubled-constant-depth}
    Suppose $\elemft_i$ can be implemented in depth $\ell$.
    The doubled gadget $\cD(\elemft_i)$ can be implemented in depth $\ell + 2$.
\end{claim}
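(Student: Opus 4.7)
The plan is essentially a depth-counting argument: I would verify that of the four steps in the construction of $\cD(\elemft_i)$, steps 1 (initialization) is free, steps 2 and 4 each contribute depth $1$, and step 3 contributes depth $\ell$ by hypothesis, summing to $\ell + 2$.

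The key observation I would highlight is that the transversal $\CNOT$ layer in step 2, namely $\{\CNOT_{\beta, n+\beta} : \beta \in [n]\}$, can be executed in a single time step. Indeed, each of the $n$ gates acts on a disjoint pair of bits: bit $\beta$ in the first block and bit $n+\beta$ in the second block. In our circuit model each bit may participate in at most one gate per time step, and this condition is satisfied because the pairs $\{\beta, n+\beta\}_{\beta \in [n]}$ are pairwise disjoint. Hence the entire transversal layer has depth exactly $1$. The same argument applies verbatim to step 4.

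For step 3, the gadget $\elemft_i$ is applied to the second block of $n$ bits, and by hypothesis of Theorem~\ref{thm:main} (and the statement of the claim), it has depth $\ell$. Since the two transversal layers act on disjoint time slices from $\elemft_i$, the three depth contributions add directly. No parallelization across these phases is needed; we simply take $1 + \ell + 1 = \ell + 2$.

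I do not foresee a real obstacle here: the argument is purely a bookkeeping verification, and the claim is explicitly flagged in the text as an immediate consequence of the construction. The only point to be slightly careful about is confirming that the transversal $\CNOT$ layers respect the ``one gate per bit per time step'' constraint, which follows from the disjointness of the pairs $\{\beta, n+\beta\}$ as noted above. Once this is observed, the depth computation is immediate and the claim follows.
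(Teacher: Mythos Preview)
Your proposal is correct and matches the paper's approach: the paper provides no explicit proof at all, stating only that the claim ``follows immediately'' from the four-step construction, which is precisely the depth-counting you carry out. Your careful check that the transversal $\CNOT$ layers in steps 2 and 4 each have depth $1$ (by disjointness of the pairs $\{\beta, n+\beta\}$) is exactly the content the paper leaves implicit.
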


The doubled gadget is useful because by querying the positions of $\supp(\g_j)$ in the locations $\{n+1,...,2n\}$, we can infer $m_i$.
Furthermore, as the circuit $\cD(\elemft)$ is a constant depth circuit, we can infer $m_i$ by querying $\ccode(\m)$ in a constant number of locations $q \leq 4\cdot2^{\ell}$.

\subsection{Influence}

In this section, we define the notion of influence.
Intuitively, it captures the number of input bits that that can affect an output bit.
The output bit is then sensitive to the value of these input bits.

Let $\elemft_{i}$ be a circuit implementing the encoded transformation $\elemid_i = \CNOT_{i,j}$, i.e.\ $\elemft_i \circ \ccode = \ccode \circ \elemid_i$.
Let $\double_{i} = \cD(\elemft_i)$ denote the doubled gadget constructed from $\elemft_i$.

\begin{definition}[Influence]
    Let $\cS_{j} = \supp(\g_{j})$ be the support of the $j$\textsuperscript{th} basis codeword.
    Given $\beta \in \cS_{j}$, the influence $\influence(\beta) \subseteq [n]$ is the set of indices such that there exists some path in $\double_{i}$ that terminates at $\beta$.
    This can naturally be extended to a set $\{\beta_1,...,\beta_{\Delta}\} \subseteq \cS_{j}$ of size $\Delta$, via the union: $\influence(\{\beta_1,...,\beta_\Delta\}) = \influence(\beta_1) \cup \cdots \cup \influence(\beta_\Delta)$.
\end{definition}

Let $i \in [k]$ and consider the $\CNOT$ gadget $\elemft_{i}$ that is guaranteed to exist per the statement of Theorem~\ref{thm:main}.
The influence $\influence(\elemft_i)$ of the circuit $\elemft_i$ is the maximum influence of the locations in it, i.e.\
\begin{align}
    \influence(\elemft_i) = \max_{\beta \in [n]}|\influence(\beta)|~.
\end{align}

\begin{lemma}
\label{lem:const-q}
    For all $\beta \in \cS_{j}$, $|\influence(\beta)| \leq 4 \cdot 2^{\ell}$.
\end{lemma}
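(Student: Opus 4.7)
The bound will follow from a straightforward backward-propagation argument in the doubled gadget $\double_i = \cD(\elemft_i)$, combined with the depth bound of Claim~\ref{claim:doubled-constant-depth}. My plan is to trace, for a fixed output coordinate $\beta \in \cS_j$, the set of input wires whose values can reach $\beta$ along a directed path through the circuit, and to show that this set at most doubles at each time step.

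\textbf{Key steps.} First, I would invoke Claim~\ref{claim:doubled-constant-depth}, which gives $\depth(\double_i) \leq \ell + 2$. Next, recall that by the model set up in Sec.~\ref{subsec:tech-overview}, every elementary gate in our circuits has fan-in at most $2$; equivalently, each wire entering a given time step is produced by a gate that read from at most $2$ wires in the previous time step. I would then define, for each $1 \le t \le \ell+2$, the set $A_t(\beta) \subseteq [2n]$ of wire locations at the input to time step $t$ that can reach $\beta$ via some directed path through the remaining $\ell + 2 - t + 1$ time steps of $\double_i$. A simple induction on the reverse time index shows $|A_{t-1}(\beta)| \leq 2\,|A_t(\beta)|$, starting from $|A_{\ell+3}(\beta)| = 1$ (just $\beta$ itself). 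Iterating $\ell + 2$ times yields $|\influence(\beta)| = |A_1(\beta)| \leq 2^{\ell + 2} = 4 \cdot 2^{\ell}$.

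\textbf{Potential subtlety.} The only place where care is required is to make sure that $\influence(\beta)$, as defined on the first block of $n$ input coordinates to $\double_i$ (i.e.\ the locations holding $\ccode(\m)$), is really bounded by this backward-reachable set in $[2n]$. Since the second block starts in the fixed state $\bzero^n$, any wire in $[n+1,2n]$ that appears in the backward-reachable set contributes no additional information about the input codeword, so restricting to $[n]$ only decreases the cardinality and the bound still holds. This is the only point at which I would pause to be careful, but it does not change the count.

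\textbf{Main obstacle.} There is no real obstacle here; the entire content of the lemma is the composition of the depth bound of Claim~\ref{claim:doubled-constant-depth} with the trivial observation that bounded-fan-in circuits have exponentially bounded cones of influence in their depth. The lemma is essentially a setup step whose value lies in downstream use: it certifies that the recovery procedure for $m_i$ via the doubled gadget reads only $q = O(1)$ coordinates of $\ccode(\m)$, which is exactly what is needed to cast $\ccode$ as a $(q,r)$ local code and invoke Theorem~\ref{thm:bound-ldc}.
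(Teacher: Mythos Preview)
Your proposal is correct and follows the same approach as the paper: invoke Claim~\ref{claim:doubled-constant-depth} to get depth $\ell+2$ for the doubled gadget, then use the fan-in-$2$ assumption to bound the backward light cone by $2^{\ell+2}=4\cdot 2^{\ell}$. The paper's proof is simply a terser version of your argument, omitting the explicit induction and the remark about the $\bzero^n$ block.
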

\begin{proof}
     We restrict our attention to gadgets $\elemft_{i}$ with depth $\ell$ that are constructed using only gates with at most 2 inputs and 2 outputs.
     Consequently, from Claim~\ref{claim:doubled-constant-depth}, the doubled gadget $D_i$ has constant depth $\ell +2$.
    Hence, there can be at most a constant number of elements in $\influence(\beta)$ for all $\beta \in \cS_{j}$.
\end{proof}

\subsection{Robustness Implies Disjoint Influence Sets}
Our objective is to show that if the influence sets of the output bits overlap a lot, then the circuit $\elemft_{i}$ cannot be $\epsilon$-robust.
To be precise, we show that if there is a large amount of overlap in the influence sets, then an adversary can corrupt some small set of size $o(d)$ which will result in a corruption of encoded information.
However, this has to be done carefully as each $\beta \in \cS_{j}$ might infer $m_{i}$ in a robust way.
For instance, $\elemft_{i}$ may be able to perform some local error correction to overcome errors that the adversary introduces.

\begin{restatable}[]{lemma}{linearreln}
\label{linearreln}
    Let $i \in [k]$ such that there exists $j \in [k]$, $j \neq i$, and a gadget $\elemft_i$ corresponding to $\elemid_i = \CNOT_{i,j}$, i.e.\ $\elemft_i \circ \ccode = \ccode \circ \elemid_i$.
    Let $\cS_j = \supp(\g_j)$ be the support of the $j$\textsuperscript{th} codeword.
    For all $\beta \in \cS_{j}$, there exists a non-empty set of vectors $U^{i}_{\beta}$ such that for all $\u \in U^{i}_{\beta}$, $\supp(\u) \subseteq \influence(\beta)$ and
    \begin{align*}
        m_{i} = \langle \u, \ccode(\m) \rangle~.
    \end{align*}
    Furthermore, if for all $\u \in U^{i}_{\beta}$, there is an erasure in the support of $\u$, then we cannot infer $m_i$ from $\influence(\beta)$.
\end{restatable}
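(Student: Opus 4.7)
The plan is to reduce both assertions to a clean linear-algebra statement about how $m_{i}$ depends on $\ccode(\m)\rvert_{\influence(\beta)}$.

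First I would use the doubled gadget $\double_{i} = \cD(\elemft_{i})$, whose action on the input $(\ccode(\m),\bzero^{n})$ produces $(\ccode(\m), m_{i}\g_{j})$. Fix $\beta \in \cS_{j} = \supp(\g_{j})$; since $(\g_{j})_{\beta} = 1$, the output coordinate at position $n + \beta$ equals $m_{i}$. Because the second block starts at $\bzero^{n}$, this output bit is a Boolean function of the first-block inputs only, and by the very definition of $\influence(\beta)$ it can depend only on the coordinates in $\influence(\beta)$. Hence $m_{i}$ is determined (as a Boolean function) by $\ccode(\m)\rvert_{\influence(\beta)}$ for every $\m \in \bbF_{2}^{k}$.

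Next I would promote this functional dependence to a \emph{linear} recovery relation. Let $\phi(\m) = \ccode(\m)\rvert_{\influence(\beta)}$, which is linear in $\m$. The previous step gives $\ker(\phi) \subseteq \ker(\m \mapsto m_{i})$, so the linear functional $\m \mapsto m_{i}$ descends to a linear functional on the image of $\phi$. Extending this functional to all of $\bbF_{2}^{\influence(\beta)}$ (always possible for a linear functional on a subspace) and padding by zeros outside $\influence(\beta)$ produces a vector $\u \in \bbF_{2}^{n}$ with $\supp(\u) \subseteq \influence(\beta)$ and $\langle \u, \ccode(\m)\rangle = m_{i}$ for all $\m$. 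I define $U^{i}_{\beta}$ as the collection of all such $\u$; it is non-empty, and any two of its elements differ by an element of $\ccode^{\perp}$ supported in $\influence(\beta)$.

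For the second assertion I would argue by contrapositive. Suppose $m_{i}$ can be inferred from the non-erased coordinates $\influence(\beta) \setminus E$. Then the Boolean-function dependence from the first paragraph holds with $\influence(\beta)$ replaced by $\influence(\beta) \setminus E$, and rerunning the linear-extension argument yields some $\u' \in U^{i}_{\beta}$ with $\supp(\u') \subseteq \influence(\beta) \setminus E$, i.e.\ $\supp(\u') \cap E = \emptyset$. This contradicts the hypothesis that every $\u \in U^{i}_{\beta}$ meets $E$.

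The step I expect to be most delicate is the passage from a Boolean-function dependence of a single output bit on its influence set to a \emph{linear} recovery vector. Since the internal gates of $\elemft_{i}$ are not assumed to be linear, the output bit at position $n+\beta$ is in general a nonlinear function of its inputs; linearity of $\u$ emerges only because both $\ccode$ and $\m \mapsto m_{i}$ are linear, combined with the standard extension of linear functionals from subspaces. Once this ingredient is isolated, both halves of the lemma follow immediately.
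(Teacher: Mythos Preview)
Your argument is correct and arguably cleaner than the paper's. The paper proceeds by contradiction using a polynomial representation: it writes the Boolean function computing $m_{i}$ from $\ccode(\m)\rvert_{\influence(\beta)}$ as a multilinear polynomial, assumes no linear relation exists so the degree exceeds~$1$, and then argues that substituting the linear encoding map yields a nonzero polynomial in $\m$ that cannot vanish identically, contradicting the defining identity. You instead go directly through the kernel containment $\ker(\phi)\subseteq\ker(\m\mapsto m_{i})$ and extend the induced linear functional from $\mathrm{im}(\phi)$ to the ambient space; this is the standard factor-through-the-image argument and avoids any discussion of polynomial degree. Your route has two concrete advantages: it makes transparent that linearity of $\u$ comes solely from linearity of $\ccode$ and of $\m\mapsto m_{i}$ (not from any structure of the gates in $\elemft_{i}$), and it yields the ``furthermore'' clause immediately by rerunning the same argument on $\influence(\beta)\setminus E$, whereas the paper's proof does not spell out this second half. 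The one point worth making explicit is that you are taking $U^{i}_{\beta}$ to be the \emph{full} affine coset of such vectors (i.e.\ all $\u$ with $\supp(\u)\subseteq\influence(\beta)$ satisfying the recovery identity), since the contrapositive for the erasure statement needs that any valid recovery vector already lies in $U^{i}_{\beta}$.
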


We defer the proof of this statement to Section~\ref{subsec:linearreln}.

This decomposition is useful because the circuit $\elemft_{i}$ might read $m_{i}$ in a robust way.
For example, it can flip $\beta \in \cS_j$ correctly even if there are errors in $\influence(\beta)$.

\begin{restatable}[]{lemma}{matchingexists}
\label{matchingexists}
    Fix $i \in [k]$ and let $\elemft_{i}$ be the gadget corresponding to $\elemid_i = \CNOT_{i,j}$.
    If it is robust to all errors $E$ of weight at most $\epsilon d$, then there exists a set of $U^{i}$ vectors such that:
    \begin{enumerate}
        \item For all $\u \in U^{i}$, $|\u| \leq 4 \cdot 2^{\ell}$.
        \item For all distinct pairs $\u,\u' \in U^{i}$, $\u$ and $\u'$ do not share support.
        \item For all $\u \in U^{i}$,
        \begin{align*}
            m_{i} = \langle \u, \ccode(\m) \rangle~.
        \end{align*}
    \end{enumerate}
    Furthermore, this set obeys $q = O(1)$ and $|U^{i}| = \Theta(d)$.
\end{restatable}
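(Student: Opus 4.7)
The plan is to construct $U^i$ greedily using Lemma~\ref{linearreln} together with a propagation bound on the doubled gadget $\double_i$. Initialize $U^i = \emptyset$ and $T = \emptyset$; I then repeatedly find a vector $\u \in \bigcup_\beta U^i_\beta$ with $\supp(\u) \cap T = \emptyset$, append it to $U^i$, and set $T \leftarrow T \cup \supp(\u)$, continuing while $|T| \leq \epsilon' d$ for a constant $\epsilon'$ chosen below.

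The core step is to show that such a fresh $\u$ always exists when $|T| \leq \epsilon' d$. By Claim~\ref{claim:doubled-constant-depth}, $\double_i$ has depth $\ell + 2$, so (since every gate has fan-in/out at most $2$) an input erasure of weight $|T|$ propagates to at most $2^{\ell+2}|T|$ erased positions at the output of $\double_i$. Choosing $\epsilon' = 2^{-(\ell+3)}$, an input erasure $|T| \leq \epsilon' d$ produces an output erasure pattern $F \subseteq [2n]$ of weight $|F| \leq d/2 < d \leq |\cS_j|$ (using $|\g_j| \geq d$, since $\g_j$ is a nonzero codeword). Therefore at least one $\beta^* \in \cS_j$ satisfies $n + \beta^* \notin F$, so $\double_i$ correctly outputs $m_i\,\g_j(\beta^*) = m_i$ at position $n + \beta^*$. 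Because this output reads only inputs in $\influence(\beta^*)$ yet produces the right value despite the erasures on $T$, the message bit $m_i$ can be inferred from $\influence(\beta^*) \setminus T$. The contrapositive of Lemma~\ref{linearreln} then furnishes some $\u \in U^i_{\beta^*}$ with $\supp(\u) \cap T = \emptyset$, which I append to $U^i$.

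To finish, each appended $\u$ obeys $|\u| \leq |\influence(\beta^*)| \leq 4 \cdot 2^\ell$ by Lemma~\ref{lem:const-q}, so $|T|$ grows by at most a constant per step. The loop therefore runs for at least $\epsilon' d / (4 \cdot 2^\ell) = \Omega(d)$ iterations, yielding $|U^i| = \Omega(d)$; conversely, pairwise disjoint nonempty supports force $|U^i| \leq |T|_{\text{final}} = O(d)$, so $|U^i| = \Theta(d)$. I anticipate the main subtlety to be the bridge between circuit-level correctness at $n + \beta^*$ and the hypothesis of Lemma~\ref{linearreln}'s contrapositive, together with picking $\epsilon'$ small enough---via the $2^{\ell+2}$ depth-propagation bound on $\double_i$ rather than just the weaker $\epsilon$-robustness of $\elemft_i$, which would be insufficient when $|\g_j|$ is exactly $d$---so that a non-erased output position $n + \beta^* \in n + \cS_j$ is guaranteed at every step.
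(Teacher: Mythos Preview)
Your proof is correct and follows essentially the same strategy as the paper's: show that any sufficiently small set $T$ must miss the support of at least one recovery vector $\u$, then iterate to build a large disjoint family. The paper packages this as a maximum-matching argument on the hypergraph whose edges are the supports of all $\u \in \bigcup_\beta U^i_\beta$, and derives a contradiction from the $\epsilon$-robustness of $\elemft_i$ (Corollary~\ref{cor:robust}) when that matching has size $o(d)$; you instead run the equivalent greedy construction and appeal directly to the depth-$(\ell+2)$ propagation bound on the doubled gadget to guarantee a fresh $\u$ at each step. Your route has the minor advantage of producing explicit constants and of staying entirely within the doubled gadget, so there is no need to translate a statement about $\influence(\beta)$ back to the output of $\elemft_i$; the paper's matching formulation is more compact but makes that translation implicitly. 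One small simplification: under the worst-case erasure model, $n+\beta^* \notin F$ already forces $\influence(\beta^*) \cap T = \emptyset$, so \emph{any} $\u \in U^i_{\beta^*}$ (nonempty by Lemma~\ref{linearreln}) is automatically disjoint from $T$---you do not actually need the contrapositive of Lemma~\ref{linearreln} here, though invoking it does no harm.
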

\begin{proof}
    Recall that $\cS_{j} = \supp(\g_j)$ and that Lemma~\ref{linearreln} guarantees that for all $\beta \in \cS_j$, there exists a  non-empty set of vectors $U^{i}_{\beta}$ such that for all $\u \in U^{i}_{\beta}$, $\supp(\u) \subseteq \influence(\beta)$ and that $m_i = \langle \u, \ccode(\m)\rangle$.
    
    Define a hypergraph $\cH = (\cV,\cE)$ whose vertices $\cV := [n]$ and for each $\beta \in \cS_{j}$ and each $\u \in U^{i}_{\beta}$, the support of $\u$ forms a hyperedge in $\cE$.
    From Lemma~\ref{linearreln}, each $\u$ is contained within $\influence(\beta)$ and from Lemma~\ref{lem:const-q}, contains at most $|\influence(\beta)| \leq 4 \cdot 2^{\ell}$ vertices.
    There are $\sum_{\beta} |U^{i}_{\beta}| = \Omega(d)$ hyperedges: This is because $|\cS_{j}| \geq d$ (as it corresponds to the support of a codeword $\g_{j}$), and $U^{i}_{\beta}$ is non empty.
    
    A matching is a set of hyperedges such that no two hyperedges share a vertex.
    A \emph{maximum} matching is a matching with the largest possible number of hyperedges.
    A matching is \emph{maximal} if no more edges can be added to it.
    A maximum matching is maximal (but a maximal matching need not be maximum).
    
    We shall let $\cU^{i}$ be the maximum matching on the hypergraph $\cH$.
    For the sake of contradiction, suppose $\cU^{i}$ has size $o(d)$.
    Let $\cV(\cU^{i}) \subseteq \cV$ be the vertices that are covered by edges in $\cU^{i}$.
    The cardinality of $\cV(\cU^{i})$ is also $o(d)$ as every hyperedge has constant degree.
    
    The set $\cE \setminus \cU^{i}$ are the edges that are not included in the maximum matching.
    Every edge in $\cE \setminus \cU^{i}$ is incident to at least one vertex in $\cV(\cU^{i})$ (because $\cU^{i}$ is a maximal matching).
    By corrupting $\cV(\cU^{i})$, we corrupt all equations in $\cE$.
    This is because of Lemma~\ref{linearreln}: For $\beta \in \cS_{j}$, if for all $\u \in U^{i}_{\beta}$, there is an erasure in the support of $\u$, then we cannot read $m_i$ from $\influence(\beta)$.
    This then guarantees that we can corrupt $d$ of the output variables using only $|\cV(\cU^{i})| = o(d)$ errors.
    In particular, these $d$ corruptions correspond to the support of a codeword and therefore will corrupt encoded information.
    However, this is a contradiction as we showed that the encoded $\CNOT_{i,j}$ is robust to $E = \epsilon d$ errors for some constant $\epsilon$ in Corollary~\ref{cor:robust}.
    
    The maximum matching $\cU^{i}$ defines the set of vectors $U^{i}$.
\end{proof}

In the next lemma, we justify the assumption that there must be a large subset of vectors such that we cannot infer $m_i$ using a single query.

\begin{lemma}
\label{lem:lower-bound-inf-1}
    Suppose we have a code family $\ccode$ such that $kd = \omega(n)$.
    For all $i \in [k]$, let $U^{i}$ be the set of vectors guaranteed to exist by Lemma~\ref{matchingexists} that allow one to infer $m_i$.
    We can assume that for all $\u \in U^{i}$, we have $|\u| > 1$.
\end{lemma}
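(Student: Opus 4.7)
My plan is to argue that the singletons inside each $U^i$ can be discarded by a straightforward pigeonhole count, at the cost of discarding only a negligible subset of indices $i \in [k]$ which will anyway be absorbed into the target bound of Theorem~\ref{thm:main}. The key structural observation is that a singleton $\u = \e_\beta \in U^i$ encodes the statement $\langle \e_\beta, \ccode(\m) \rangle = m_i$ for every $\m$: coordinate $\beta$ of every codeword literally equals $m_i$. Since each coordinate of a linear code is a fixed linear functional of the message, no position $\beta$ can equal two distinct message bits $m_i$ and $m_{i'}$ for $i \neq i'$. Combined with the disjoint-support property within each $U^i$ from Lemma~\ref{matchingexists}, the singletons across all the sets $U^i$ therefore live on pairwise distinct positions of $[n]$, yielding the counting bound $\sum_{i=1}^{k} s_i \leq n$, where $s_i := |\{\u \in U^i : |\u|=1\}|$.

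Next, I would split $[k] = I_{\text{bad}} \sqcup I_{\text{good}}$ with $I_{\text{bad}} := \{i : s_i \geq |U^i|/2\}$. Because $|U^i| = \Theta(d)$ by Lemma~\ref{matchingexists}, each $i \in I_{\text{bad}}$ contributes $s_i = \Omega(d)$ to the sum, forcing $|I_{\text{bad}}| = O(n/d)$. Since $d \geq d^{1/q}$, we have $|I_{\text{bad}}| = O(n/d^{1/q})$, which already lies inside the conclusion of Theorem~\ref{thm:main} and so can simply be absorbed into the final estimate. For every $i \in I_{\text{good}}$, the trimmed set $\tilde U^i := \{\u \in U^i : |\u| \geq 2\}$ has size $|U^i| - s_i \geq |U^i|/2 = \Theta(d)$ and automatically inherits the properties (disjoint support, correct inner product with every codeword) listed in Lemma~\ref{matchingexists}. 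Relabelling $\tilde U^i$ as $U^i$ then gives the assumption sought: for every surviving $i$, all $\u \in U^i$ satisfy $|\u| \geq 2$, while still $|U^i| = \Theta(d)$.

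The main subtlety—and the only place where the standing hypothesis $kd = \omega(n)$ really matters—is to ensure the reduction is not vacuous, i.e.\ that $|I_{\text{good}}| = \Theta(k)$ and so the subsequent application of Theorem~\ref{thm:bound-ldc} (which needs $q \geq 2$, hence requires the singletons to have been removed) operates on a meaningfully large index set. Under $kd = \omega(n)$ we indeed have $|I_{\text{bad}}| = O(n/d) = o(k)$, so $|I_{\text{good}}| = (1-o(1))k$. No further combinatorial obstacles arise; everything else is bookkeeping.
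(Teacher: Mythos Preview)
Your argument is correct and follows the same pigeonhole strategy as the paper: count the singleton positions, observe they cannot exceed $n$, and discard the offending indices using $kd=\omega(n)$. In fact your version is tidier than the paper's, which only explicitly treats the extreme case where \emph{every} $\u\in U^i$ is a singleton and then jumps to ``discarding some message symbols if need be''; you additionally handle the mixed case by trimming the singletons from each surviving $U^i$ and checking that $|\tilde U^i|=\Theta(d)$ is preserved, and you make explicit the (crucial but unstated in the paper) fact that a single coordinate $\beta$ cannot serve as a weight-one recovery vector for two distinct message indices.
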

\begin{proof}
    From Lemma~\ref{matchingexists}, for all $i \in [k]$, we have $|U^{i}| = \Omega(d)$.
    Suppose there exists $i \in [k]$ such that $|\u| = 1$ for all $\u \in U^{i}_{\beta}$.
    However, this would mean that there are $\Omega(d)$ ways to access $m_i$ using just a single query.
    If this is possible for all $\Omega(k)$ gates, then it must be the case that $kd = O(n)$.
    This contradicts the assumption that $kd = \omega(n)$.

    By discarding some message symbols if need be, we assume that for all $i \in [k]$, for all $\u \in U^{i}$, $|\u| > 1$.
\end{proof}

\subsection{Proof of Lemma \ref{linearreln}}
\label{subsec:linearreln}
In this section, we prove Lemma \ref{linearreln}.
For convenience, we restate the lemma here before proceeding to the proof.
\linearreln*
\begin{proof}
Let $\beta \in \cS_j$ and observe that it is flipped if and only if $m_i = 1$.
Let $\influence(\beta) \subseteq [n]$ denote the influence of $\beta$.
For the sake of contradiction, suppose there are no linear functions in $\influence(\beta)$ used to ascertain whether $\beta$ will be flipped.

Let $\mathsf{Pow}(\influence(\beta))$ be the power set of $\influence(\beta)$.
Any Boolean function from $\influence(\beta)$ that outputs a single bit $m_i$ can be expressed as
\begin{align}
\label{eq:contra-nonlinear}
    m_{i} = \sum_{I \in \mathsf{Pow}(\influence(\beta))} m_{I} \ccode(\m)^{I}~,
\end{align}
for some coefficients $m_{I} \in \{0,1\}$ where $\ccode(\m)^{I} = \prod_{\alpha \in I} \ccode(\m)_{\alpha}$ where $\ccode(\m)_{\alpha}$ represents the $\alpha$\textsuperscript{th} component of $\ccode(\m)$.
By the assumption we made towards contradiction, the degree of the polynomial $\sum_{I} m_{I} \ccode(\m)^{I}$ is $\delta$.
Consequently, for some $J \in \mathsf{Pow}(\influence(\beta))$, $|J| = \delta > 1$ such that $m_{J} = 1$.

Consider the polynomial $f$ defined as
\begin{align}
    f(\m) = \sum_{I} m_{I} \ccode(\m)^{I} + m_{i}~.
\end{align}
As $\ccode(\m)$ and $\m$ are related linearly (via the generator matrix), $f$ is also a non-trivial polynomial over $\{0,1\}$ that has degree $\delta > 1$.
Hence, it cannot be identically zero.
This contradicts the assumption in Eq. \eqref{eq:contra-nonlinear}.

Thus, there must be linear functions over $\influence(\beta)$ that can evaluate $m_i$.
\end{proof}

\subsection{Proof of Theorem \ref{thm:main}}

In the previous sections, we have stated and proved all the tools required to prove our main result, Theorem~\ref{thm:main}.
We conclude the paper by assembling these results to prove Theorem~\ref{thm:main}.

\begin{proof}(Theorem \ref{thm:main}).
    Combining Lemma~\ref{matchingexists} and Lemma~\ref{lem:lower-bound-inf-1}, we have shown that the code family $\famC$ satisfies Def.~\ref{def:locality} and, therefore, is a family of codes with $(q,r)$ locality where $1 < q \leq 4 \cdot 2^{\ell}$ and $r = \Theta(d)$.
    As the depth $\ell$ of the encoded $\CNOT$ gates is assumed to be a constant independent of the block size $n$, we must have $q = O(1)$.

    Therefore, we can use the bound on the code dimension from Theorem~\ref{thm:bound-ldc}: The dimension $k$ of the code must be sublinear in $n$, obeying
    \begin{align}
        k = O\left(\frac{n}{d^{1/q}}\right)~.
    \end{align}
    This completes the proof of Theorem~\ref{thm:main}.
\end{proof}

\pagebreak

\bibliographystyle{alpha}
\bibliography{references}

\newcommand{\etalchar}[1]{$^{#1}$}
\begin{thebibliography}{BYRST02}

\bibitem[ABO97]{aharonov1997fault}
Dorit Aharonov and Michael Ben-Or.
\newblock Fault-tolerant quantum computation with constant error.
\newblock In {\em Proceedings of the twenty-ninth annual ACM symposium on
  Theory of computing}, pages 176--188. ACM, 1997.

\bibitem[AGP05]{aliferis2005quantum}
Panos Aliferis, Daniel Gottesman, and John Preskill.
\newblock Quantum accuracy threshold for concatenated distance-3 codes.
\newblock {\em arXiv preprint quant-ph/0504218}, 2005.

\bibitem[BKS21]{beverland2021cost}
Michael~E Beverland, Aleksander Kubica, and Krysta~M Svore.
\newblock Cost of universality: A comparative study of the overhead of state
  distillation and code switching with color codes.
\newblock {\em PRX Quantum}, 2(2):020341, 2021.

\bibitem[BYRST02]{bar2002streaming}
Ziv Bar-Yossef, Omer Reingold, Ronen Shaltiel, and Luca Trevisan.
\newblock Streaming computation of combinatorial objects.
\newblock In {\em Proceedings 17th IEEE Annual Conference on Computational
  Complexity}, pages 165--174. IEEE, 2002.

\bibitem[CGM12]{cheraghchi2012correctness}
Mahdi Cheraghchi, Anna G{\'a}l, and Andrew Mills.
\newblock Correctness and corruption of locally decodable codes.
\newblock In {\em Electron. Colloquium Comput. Complex.}, 2012.

\bibitem[DJK{\etalchar{+}}02]{deshpande2002better}
Amit Deshpande, Rahul Jain, Telikepalli Kavitha, Satyanarayana~V Lokam, and
  Jaikumar Radhakrishnan.
\newblock Better lower bounds for locally decodable codes.
\newblock In {\em Proceedings 17th IEEE Annual Conference on Computational
  Complexity}, pages 184--193. IEEE, 2002.

\bibitem[DO77a]{dobrushin1977lower}
Roland~L'vovich Dobrushin and SI~Ortyukov.
\newblock Lower bound for the redundancy of self-correcting arrangements of
  unreliable functional elements.
\newblock {\em Problemy Peredachi Informatsii}, 13(1):82--89, 1977.

\bibitem[DO77b]{dobrushin1977upper}
Roland~L'vovich Dobrushin and SI~Ortyukov.
\newblock Upper bound on the redundancy of self-correcting arrangements of
  unreliable functional elements.
\newblock {\em Problemy Peredachi Informatsii}, 13(3):56--76, 1977.

\bibitem[ES95]{evans1995information}
William~S Evans and Leonard~J Schulman.
\newblock Information theory and noisy computation.
\newblock In {\em Proceedings of 1995 IEEE International Symposium on
  Information Theory}, page 456. IEEE, 1995.

\bibitem[GG94]{gacs1994lower}
P{\'e}ter G{\'a}cs and Anna G{\'a}l.
\newblock Lower bounds for the complexity of reliable {B}oolean circuits with
  noisy gates.
\newblock {\em IEEE Transactions on Information Theory}, 40(2):579--583, 1994.

\bibitem[GJ25]{guyot2025addressability}
J{\'e}r{\^o}me Guyot and Samuel Jaques.
\newblock On the addressability problem on {CSS} codes.
\newblock {\em arXiv preprint arXiv:2502.13889}, 2025.

\bibitem[Got13]{gottesman2013fault}
Daniel Gottesman.
\newblock Fault-tolerant quantum computation with constant overhead.
\newblock {\em arXiv preprint arXiv:1310.2984}, 2013.

\bibitem[Gro06]{gronemeier2006note}
Andre Gronemeier.
\newblock A note on the decoding complexity of error-correcting codes.
\newblock {\em Information processing letters}, 100(3):116--119, 2006.

\bibitem[Gro14]{grover2014shannon}
Pulkit Grover.
\newblock Is ``{S}hannon-capacity of noisy computing''' zero?
\newblock In {\em 2014 IEEE International Symposium on Information Theory},
  pages 2854--2858. IEEE, 2014.

\bibitem[Kit97]{kitaev1997quantum}
A~Yu Kitaev.
\newblock Quantum computations: algorithms and error correction.
\newblock {\em Russian Mathematical Surveys}, 52(6):1191--1249, 1997.

\bibitem[KL96]{knill1996concatenated}
Emanuel Knill and Raymond Laflamme.
\newblock Concatenated quantum codes.
\newblock {\em arXiv preprint quant-ph/9608012}, 1996.

\bibitem[KP13]{kovalev2013fault}
Alexey~A Kovalev and Leonid~P Pryadko.
\newblock Fault tolerance of quantum low-density parity check codes with
  sublinear distance scaling.
\newblock {\em Physical Review A}, 87(2):020304, 2013.

\bibitem[KT00]{katz2000efficiency}
Jonathan Katz and Luca Trevisan.
\newblock On the efficiency of local decoding procedures for error-correcting
  codes.
\newblock In {\em Proceedings of the thirty-second annual ACM symposium on
  Theory of computing}, pages 80--86, 2000.

\bibitem[Pip85]{pippenger1985networks}
Nicholas Pippenger.
\newblock On networks of noisy gates.
\newblock In {\em 26th Annual Symposium on Foundations of Computer Science
  (sfcs 1985)}, pages 30--38. IEEE, 1985.

\bibitem[PR59]{peterson1959codes}
W~Wesley Peterson and Michael~O Rabin.
\newblock On codes for checking logical operations.
\newblock {\em IBM Journal of Research and Development}, 3(2):163--168, 1959.

\bibitem[PST91]{pippenger1991lower}
Nicholas Pippenger, George~D Stamoulis, and John~N Tsitsiklis.
\newblock On a lower bound for the redundancy of reliable networks with noisy
  gates.
\newblock {\em IEEE Transactions on Information Theory}, 37(3):639--643, 1991.

\bibitem[Rom06]{romashchenko2006reliable}
Andrei Romashchenko.
\newblock Reliable computations based on locally decodable codes.
\newblock In {\em Annual Symposium on Theoretical Aspects of Computer Science},
  pages 537--548. Springer, 2006.

\bibitem[RS91]{reischuk1991reliable}
R{\"u}diger Reischuk and Bernd Schmeltz.
\newblock Reliable computation with noisy circuits and decision trees-a general
  $n \log n$ lower bound.
\newblock In {\em [1991] Proceedings 32nd Annual Symposium of Foundations of
  Computer Science}, pages 602--611. IEEE Computer Society, 1991.

\bibitem[RS08]{rachlin2008framework}
Eric Rachlin and John~E Savage.
\newblock A framework for coded computation.
\newblock In {\em 2008 IEEE International Symposium on Information Theory},
  pages 2342--2346. IEEE, 2008.

\bibitem[{Sim}11]{simon2011capacity}
{Simon, Fran{\c{c}}ois}.
\newblock On the capacity of noisy computations.
\newblock In {\em 2011 IEEE Information Theory Workshop}, pages 185--189. IEEE,
  2011.

\bibitem[Ska18]{skachek2018batch}
Vitaly Skachek.
\newblock Batch and pir codes and their connections to locally repairable
  codes.
\newblock {\em Network Coding and Subspace Designs}, pages 427--442, 2018.

\bibitem[Spi96]{spielman1996highly}
Daniel~A Spielman.
\newblock Highly fault-tolerant parallel computation.
\newblock In {\em Proceedings of 37th Conference on Foundations of Computer
  Science}, pages 154--163. IEEE, 1996.

\bibitem[VN56]{von1956probabilistic}
John Von~Neumann.
\newblock Probabilistic logics and the synthesis of reliable organisms from
  unreliable components.
\newblock {\em Automata studies}, 34(34):43--98, 1956.

\bibitem[WDW05]{wehner2005improved}
Stephanie Wehner and Ronald De~Wolf.
\newblock Improved lower bounds for locally decodable codes and private
  information retrieval.
\newblock In {\em International Colloquium on Automata, Languages, and
  Programming}, pages 1424--1436. Springer, 2005.

\bibitem[Win62]{winograd1962coding}
Shmuel Winograd.
\newblock Coding for logical operations.
\newblock {\em IBM Journal of Research and Development}, 6(4):430--436, 1962.

\bibitem[Y{\etalchar{+}}12]{yekhanin2012locally}
Sergey Yekhanin et~al.
\newblock Locally decodable codes.
\newblock {\em Foundations and Trends{\textregistered} in Theoretical Computer
  Science}, 6(3):139--255, 2012.

\bibitem[Yek08]{yekhanin2008towards}
Sergey Yekhanin.
\newblock Towards 3-query locally decodable codes of subexponential length.
\newblock {\em Journal of the ACM (JACM)}, 55(1):1--16, 2008.

\bibitem[YGK17]{yang2017computing}
Yaoqing Yang, Pulkit Grover, and Soummya Kar.
\newblock Computing linear transformations with unreliable components.
\newblock {\em IEEE Transactions on Information Theory}, 63(6):3729--3756,
  2017.

\bibitem[YK24]{yamasaki2024time}
Hayata Yamasaki and Masato Koashi.
\newblock Time-efficient constant-space-overhead fault-tolerant quantum
  computation.
\newblock {\em Nature Physics}, pages 1--7, 2024.

\end{thebibliography}

\appendix

\section{Explicit schemes for fault-tolerant classical computation}
\label{app:explicit}

Our main result shows that to construct explicit schemes for performing fault-tolerant $\CNOT$ gates, a code family must correspond to a $(q,r)$ local code.
In this section, we go the other direction: Beginning with a $(q,r)$ local code, we present explicit fault-tolerant circuits for performing encoded gates.
This is only included for the sake of completeness.
As noted in the main text, the work of Romashchenko \cite{romashchenko2006reliable} has already presented a scheme for fault-tolerant computation using locally decodable codes.

Let $\famC = \{\ccode: \bbF_2^{k} \to \bbF_2^{n}\}$ be an $[n,k(n),d(n)]$ code family such that it is a $(q,r)$ local code, where $q$ is some constant and $r = \gamma d$ where $\gamma \in (0,1)$ is some constant.

Let $\cid$ be an circuit with width $\widthid$ and depth $\depthid$.
We pick $\ccode \in \famC$ such that it has the smallest $n$ for which $k \geq \widthid$.
Let $\{\g_{\alpha}\}_{\alpha \in [k]}$ denote the privileged basis for $\ccode$.
For simplicity, we assume that $d = \Theta(n)$.
Furthermore, we assume that all codewords in the privileged basis have length $\Theta(d)$, as otherwise, the encoded $\CNOT$ gate may not implementable in constant depth.

We assume that $\cid$ is sparse, i.e.\ at every time step, there is one and only one encoded $\CNOT$ gate being implemented.
We shall specify a protocol to perform the encoded $\CNOT$ gate between registers $i,j \in [k]$.

\textbf{Circuit:}
\begin{enumerate}
    \item Divide the support of $\cS_{j} := \supp(\g_{j})$ arbitrarily into $r$ disjoint subsets $\cS^{1}_{j},...,\cS^{r}_{j} \subseteq [n]$ such that for all $\alpha \in r$, $\cS^{\alpha}_{j}$ has size at most $|\g_{j}|/r$.
    \item As $\ccode$ is a $(q,r)$ local code, there exist vectors $\v^{i}_1,...,\v^{i}_r$ to infer $m_i$.
    For $\alpha \in [r]$:
    \begin{enumerate}
        \item \textbf{Inference:} We infer $\hat{m}_{i,\alpha} = \langle \v^{i}_{\alpha}, \ccode(\m)\rangle$.
        We have
        \begin{align}
            \hat{m}_{i,\alpha} =
            \begin{cases}
                m_i &\mbox{if there are no erasures in } \supp(\v^{i}_{\alpha})\\
                \perp &\mbox{otherwise}
            \end{cases}~.
        \end{align}
        \item \textbf{Flip:} If $\hat{m}_{i,\alpha} = 1$, we flip all bits in $\cS^{\alpha}_{j}$ .
        However, if $\hat{m}_{i,\alpha} = \perp$, we erase all locations in $\cS^{\alpha}_{j}$.
    \end{enumerate}
\end{enumerate}

The following claim shows that this gate is robust to $\epsilon d$ erasure errors for some constant $\epsilon$.

\begin{claim}
\label{claim:app-fault-tolerant}
    This scheme is fault tolerant to up to $\epsilon d$ erasure errors where $\epsilon = \min_{j} \frac{r}{r+|\g_j|}$.
\end{claim}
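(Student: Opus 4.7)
The plan is to track the erasure pattern through the circuit and show that an input erasure $E$ of weight at most $\epsilon d$ produces an output erasure pattern $F$ of weight at most $d$, which is then correctable by an ideal decoder in the sense of Definition~\ref{def:robust}.

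First, I would partition the $r$ iterations of Step 2 according to whether the inference step succeeds. Let $A \subseteq [r]$ be the set of indices $\alpha$ for which $\supp(\v^i_\alpha) \cap E = \emptyset$, and let $B = [r] \setminus A$. For $\alpha \in A$, Step 2(a) returns $\hat{m}_{i,\alpha} = m_i$ exactly, so Step 2(b) applies the correct conditional flip on $\cS^\alpha_j$. For $\alpha \in B$, Step 2(b) replaces every bit of $\cS^\alpha_j$ by the erasure symbol $\perp$. Since flipping an already-erased bit leaves it erased, and the ancilla registers used to compute $\hat{m}_{i,\alpha}$ do not propagate to output positions outside $\cS^\alpha_j$, the output erasure pattern is exactly
\[
F \;=\; E \;\cup\; \bigcup_{\alpha \in B} \cS^\alpha_j,
\]
and on $[n]\setminus F$ the output agrees with $\ccode(\m + m_i \e_j)$.

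Next, I would bound $|F|$. The key step uses Property~\ref{it:pairwise-disjoint} of Definition~\ref{def:locality}: the vectors $\v^i_1,\ldots,\v^i_r$ have pairwise disjoint supports. Hence each $\alpha \in B$ can be charged to a distinct element of $E$, giving $|B| \leq |E|$. By construction each $\cS^\alpha_j$ has size at most $|\g_j|/r$, so
\[
|F| \;\leq\; |E| + |B|\cdot\frac{|\g_j|}{r} \;\leq\; |E| \cdot \frac{r + |\g_j|}{r}.
\]
Choosing $\epsilon = \min_{j}\tfrac{r}{r+|\g_j|}$ and invoking this bound for the particular $j$ involved in the gate gives $|F| \leq d$ whenever $|E| \leq \epsilon d$, which is the claimed robustness (the strict inequality required by Definition~\ref{def:robust} can be recovered by applying the argument to $|E| < \epsilon d$).

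There is no real obstacle here beyond careful bookkeeping; the two points worth checking are that erasures on the target block survive the flip correctly (so they are already accounted for in $E \subseteq F$) and that the ancillas holding $\hat{m}_{i,\alpha}$ only spread erasure back onto $\cS^\alpha_j$, not onto other output bits. Once these are confirmed, the disjoint-support property of the $(q,r)$ local code does all the work, translating a single input erasure into at most one failed inference and hence into at most $|\g_j|/r$ additional output erasures.
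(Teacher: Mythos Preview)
Your proposal is correct and follows essentially the same argument as the paper's proof: each input erasure survives to the output (the ``$1$'') and can, via the pairwise-disjointness of the $\v^i_\alpha$, spoil at most one inference, contributing at most $|\g_j|/r$ additional erased positions (the ``$\eta_j$''); summing gives $|F|\le |E|\,(1+|\g_j|/r)$. The paper's proof is terser and leaves the disjoint-support step implicit, whereas you spell it out explicitly, but the logic is the same.
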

\begin{proof}
    Let $i,j \in [k]$ be distinct and consider the protocol for performing the encoded $\CNOT_{i,j}$ gate above.
    For $\beta \in [r]$, we can corrupt the inference made by $\v^{i}_{\alpha}$ by corrupting a single variable in $\supp(\v^{i}_{\alpha})$.

    Let $\eta_{j} = |\g_{j}|/r$ be a constant.
    Each erasure in the input can result in at most $1+\eta$ erasure errors in the final word.

    If the adversary can corrupt $\epsilon d$ locations, then the output word will have strictly less than $d$ erasure errors if
    \begin{align}
        \left(1 + \eta_{j} \right)\epsilon < 1~.
    \end{align}
    The claim follows by minimizing $\epsilon$ over all possible $\CNOT$ gates.
\end{proof}

We conclude this section by noting that the circuit above used to implement encoded $\CNOT$ can tolerate $\epsilon$ fraction of errors where $\epsilon$ is a constant.

\begin{corollary}
    Let $\ccode : \bbF_2^{k} \to \bbF_2^{n}$ be a $(q,r)$ local code such that $q = O(1)$ and $r = \gamma d$ for some constant $\gamma \in (0,1)$.
    Let $B_{\ccode} = \{\g_1,...,\g_k\}$ be the privileged basis for $\ccode$ and assume that for all $i \in [k]$, that $|\g_i| = \Theta(d)$.
    Then the fraction of errors $\epsilon$ that the encoded $\CNOT$ is robust to is a constant.
\end{corollary}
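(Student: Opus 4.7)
The plan is to derive the stated corollary as an immediate consequence of Claim~\ref{claim:app-fault-tolerant} by plugging in the asymptotic assumptions on $r$ and on the basis weights $|\g_j|$. Specifically, Claim~\ref{claim:app-fault-tolerant} already gives us the explicit formula
\[
\epsilon \;=\; \min_{j \in [k]} \frac{r}{r + |\g_j|},
\]
so it suffices to show that the right-hand side is bounded below by a positive constant independent of $n$.

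First, I would fix the code $\ccode : \bbF_2^{k} \to \bbF_2^{n}$ from the corollary and invoke the hypotheses: $r = \gamma d$ for some constant $\gamma \in (0,1)$, and for every $j \in [k]$ the basis codeword satisfies $|\g_j| = \Theta(d)$, meaning there exist constants $0 < c_1 \leq c_2$ (independent of $n$) such that $c_1 d \leq |\g_j| \leq c_2 d$ for all $j$. Substituting $r = \gamma d$ into the expression from Claim~\ref{claim:app-fault-tolerant} yields, for each $j$,
\[
\frac{r}{r + |\g_j|} \;=\; \frac{\gamma d}{\gamma d + |\g_j|} \;\geq\; \frac{\gamma d}{\gamma d + c_2 d} \;=\; \frac{\gamma}{\gamma + c_2}.
\]
Taking the minimum over $j \in [k]$ preserves this lower bound, so
\[
\epsilon \;\geq\; \frac{\gamma}{\gamma + c_2},
\]
which is a strictly positive constant depending only on $\gamma$ and $c_2$, not on $n$.

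Finally, I would conclude by noting that $\gamma, c_2 > 0$ are independent of the block length $n$ by assumption, so the fraction of erasure errors tolerated by the explicit construction of Section~\ref{app:explicit} is a constant, as claimed. There is no real obstacle here: the corollary is essentially just a bookkeeping step that specializes Claim~\ref{claim:app-fault-tolerant} under the stated asymptotic regime. The only mildly nontrivial observation is that the uniform upper bound $|\g_j| \leq c_2 d$ (rather than just a per-$j$ $\Theta(d)$ statement) is needed to pull the minimum outside and obtain a single constant; this uniformity is built into the convention that $|\g_j| = \Theta(d)$ for all $j \in [k]$.
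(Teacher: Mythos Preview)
Your proposal is correct and follows essentially the same route as the paper's own proof: both plug $r=\gamma d$ and the uniform bound $|\g_j|\leq c_2 d$ into the formula from Claim~\ref{claim:app-fault-tolerant} to obtain $\epsilon\geq \gamma/(\gamma+c_2)$. The paper simply names your constant $c_2$ as $\zeta := d^{-1}\max_j |\g_j|$, but the argument is identical.
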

\begin{proof}
    Let $\zeta = d^{-1} \cdot \max_{j} |\g_j|$.
    According to Claim~\ref{claim:app-fault-tolerant}, we have
    \begin{align}
        \epsilon = \min_{j} \frac{r}{r + |g_j|} \geq \frac{\gamma}{\gamma + \zeta}~.
    \end{align}
    In the inequality, we have substituted $r = \gamma d$.
    This completes the proof.
\end{proof}

\end{document}